\newtheorem{theorem}{Theorem}[section]
\newtheorem{proposition}[theorem]{Proposition}
\newtheorem{lemma}[theorem]{Lemma}
\newtheorem{claim}[theorem]{Claim}
\newtheorem{corollary}[theorem]{Corollary}
\newcommand{\C}{\ensuremath{\mathbb{C}}}
\newcommand{\R}{\ensuremath{\mathbb{R}}}
\newcommand{\Z}{\ensuremath{\mathbb{Z}}}
\newcommand{\lat}{\mathcal{L}}
\newcommand{\M}{\mathcal{M}}
\newcommand{\eps}{\varepsilon} 
\renewcommand{\epsilon}{\varepsilon}
\newcommand{\poly}{\mathrm{poly}}
\DeclareMathOperator{\dist}{dist}
\DeclarePairedDelimiter\inner{\langle}{\rangle}
\DeclarePairedDelimiter\floor{\lfloor}{\rfloor}
\newcommand\ket[1]{{ |{#1} \rangle }}
\begin{document}
	
		\title{\textbf{An Efficient Quantum Factoring Algorithm}}
		\author{
			Oded Regev\thanks{Courant Institute of Mathematical Sciences, New York
				University. Supported by a Simons Investigator Award from the Simons Foundation.}\\
			%
		}
		\date{}
		\maketitle

    \begin{abstract}
    We show that $n$-bit integers can be factorized by independently running a quantum circuit with $\tilde{O}(n^{3/2})$ gates
    for $\sqrt{n}+4$ times, and then using polynomial-time classical post-processing. The correctness of the algorithm relies on a number-theoretic heuristic assumption reminiscent of those used in subexponential classical factorization algorithms.
    It is currently not clear if the algorithm can lead to improved physical implementations in practice.  
    \end{abstract}
  
\section{Introduction}\label{sec:intro}

Shor's celebrated algorithm~\cite{Shor94} allows to factorize $n$-bit integers using a quantum circuit of size (i.e., number of gates)  $\tilde{O}(n^2)$. For factoring to be feasible in practice, however, it is desirable to reduce this number further. Indeed, all else being equal, the fewer quantum gates there are in a circuit, the likelier it is that it can be implemented without noise and decoherence destroying the quantum effects. 

Here we show that quantum circuits of size $\tilde{O}(n^{3/2})$ are enough. More precisely, we present an algorithm that independently runs $\sqrt{n}+4$ times a quantum circuit with $\tilde{O}(n^{3/2})$ gates. The outputs are then classically post-processed in polynomial time (using a lattice reduction algorithm) to generate the desired factorization. 

The quantum circuit size can be made even smaller if super-polynomial-time classical post-processing is allowed. Specifically, for any $0 < \eps \le 1/2$, it can be brought down to $\tilde{O}(n^{3/2-\eps})$ using classical post-processing (solving a hard lattice problem) running in time $\exp(\tilde{O}(n^{2 \eps}))$. The number of times the quantum circuit needs to be applied is still small ($n^{1/2+\eps}$). A curious corollary is that if lattice-based cryptography is broken classically (more precisely, if a polynomial-time classical algorithm exists for hard lattice problems), then quantum circuits of nearly-linear size $\tilde{O}(n)$ are sufficient for factoring integers. This is obtained by taking $\eps=1/2$ in the previous discussion.

A few remarks are in order. First, our algorithm relies on a number-theoretic heuristic assumption reminiscent of those used in subexponential classical factorization algorithms~\cite{CrandallP05}. 
Second, the number of qubits in our quantum circuit is $O(n^{3/2})$, higher than the $O(n)$ in optimized implementations of Shor's algorithm. Third, the quantum circuit's depth is smaller than the one in Shor's original algorithm by $\tilde{O}(n^{1/2})$. 
(Note, though, that Shor's algorithm can be implemented using circuits of depth only $O(\log n)$ at the expense of a (much) larger number of gates $\Omega(n^5)$~\cite{CleveW00}.) 


It is important to keep in mind that our analysis is asymptotic, and hidden constants might make our algorithm inefficient for small values of $n$, such as 2048 bits. 
It remains to be seen whether the algorithm, once sufficiently optimized, can provide an improvement in practice over Shor's algorithm for small $n$. While such an analysis is beyond the scope of this paper, we can make a few remarks. 
First, fast integer multiplication is not advantageous for small $n$, and one instead uses (highly optimized variants of) naive school book integer multiplication, leading to an asymptotic number of gates of approximately $n^3$ for Shor's algorithm and $n^{2.5}$ for our algorithm.
Second, depending on the quantum computer architecture used, the space (or number of qubits) used by the algorithm can play a big role in its practicality and needs to be taken into account. Shor's algorithm is amenable to extensive optimizations, allowing implementations with a very small number of qubits (see~\cite{GidneyE2021} and references therein). It is currently not clear if our algorithm can benefit from all these optimizations, the main issue being our use of repeated squaring (but see below for followup work~\cite{RagavanV23}). 
Finally, lattice reduction algorithms provide surprisingly good approximation factors in practice (e.g., $1.01^d$ for a $d$-dimensional lattice~\cite{GamaN08}), suggesting that our approach can potentially achieve circuit size closer to $n^2$ (without fast integer multiplication), at the cost of increasing the time spent on classical post-processing. 


\paragraph{Statement of the result:}
Fix some $n$-bit number $N \le 2^n$ to be factorized. For some $d>0$, let $b_1,\ldots,b_d$ be some small $O(\log d)$-bit integers (say, $b_i$ is the $i$th prime number) and let $a_i = b_i^2$. Define the lattice
\begin{align}\label{eq:defofl}
\lat &= 
\Big\{ (z_1,\ldots,z_d) \in \Z^d ~\Big|~ \Big(\prod_i b_i^{z_i}\Big)^2 = 1 \bmod N \Big\}
\nonumber \\ 
&= 
\Big\{ (z_1,\ldots,z_d) \in \Z^d ~\Big|~ \prod_i a_i^{z_i} = 1 \bmod N \Big\} \subset \Z^d \; .
\end{align}
Also define the sublattice of $\lat$ given by
\[
\lat_0 = 
\Big\{ (z_1,\ldots,z_d) \in \Z^d ~\Big|~ \prod_i b_i^{z_i} \in \{-1,1\} \bmod N \Big\} \subseteq \lat \; .
\]
Assuming $N$ is odd and not a prime power (since factoring is easy otherwise), we heuristically expect at least half the vectors in $\lat$ to not be in $\lat_0$. For instance, when $N$ is a product of two distinct odd primes, there are 4 square roots of $1$ modulo $N$, so heuristically, half of the vectors in $\lat$ should not be in $\lat_0$. 

Given a vector $z \in \lat \setminus \lat_0$, we have that $b = \prod_i b_i^{z_i} \bmod N$ is a square root of unity modulo $N$ (because $z \in \lat$) yet it is a non-trivial square root of $1$, i.e., not equal to $\pm 1$ modulo $N$ (because $z \notin \lat_0$). In this case, $N$ divides the product $(b-1)(b+1)$ but does not divide either of the terms, and we therefore must have that $\textrm{gcd}(b-1,N)$ is a non-trivial factor of $N$, as desired. Therefore, it suffices to find a vector in $\lat \setminus \lat_0$. 

By the pigeon-hole principle (or Minkowski's first theorem) and using the fact that there are at most $N \le 2^n$ possible values for the product modulo $N$ in Eq.~\eqref{eq:defofl} (i.e., the determinant of $\lat$ is at most $2^n$), $\lat$ is guaranteed to have nonzero vectors of norm at most $\sqrt{d} 2^{n/d}$.\footnote{To see this, consider all vectors $z \in \{-2^{n/d-1}, \ldots , 2^{n/d-1}\}^d$. Since there are more than $2^n \ge N$ such vectors, there are two that lead to the same product in Eq.~\eqref{eq:defofl}. Their difference is therefore in $\lat$, and is of norm at most $\sqrt{d} 2^{n/d}$.} While we expect some (in fact, at least half) of them to be in $\lat \setminus \lat_0$, we do not know how to prove it.\footnote{It was, however, communicated to us by I.\ Shparlinski that assuming the Generalized Riemann Hypothesis, when $N$ is not a prime power and does not have any small prime divisors, $d = \tilde{O}(n^2)$, and $b_i$ is the $i$th prime number, $\lat \setminus \lat_0$ contains a vector with all coordinates in $\{0,1\}$, and in particular, of norm at most $\sqrt{d}$. Unfortunately, this does not seem strong enough for an improved algorithmic result.} Instead, we will make the heuristic assumption that there exists a vector in $\lat \setminus \lat_0$ of norm at most $T=\exp(O(n/d))$. With this assumption, the algorithm is guaranteed to provide a factorization of $N$, as in our main result, stated next. 

\begin{theorem}\label{thm:main}
Let $N$ be an $n$-bit number and assume that for $d=\sqrt{n}$ and $O(\log n)$-bit numbers $b_1,\ldots,b_d$, there exists a vector in $\lat \setminus \lat_0$ of norm at most $T=\exp(O(\sqrt{n}))$. Then, there is a classical polynomial-time algorithm that outputs a non-trivial factor of $N$ using $\sqrt{n}+4$ calls to a quantum circuit of size $O(n^{3/2} \log n)$. 
\end{theorem}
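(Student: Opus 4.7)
The plan is to mimic Shor's algorithm in the $d$-dimensional exponent setting: design a quantum circuit that, on measurement, produces a sample drawn (approximately) from a discrete Gaussian on the dual lattice $\lat^*$; repeat it $d+4$ times to assemble a generating set; then run LLL on the reconstructed basis of $\lat$ to obtain a short vector. By the heuristic assumption, this short vector lies in $\lat\setminus\lat_0$ with constant probability, at which point the $\gcd$ argument already given in the excerpt yields a non-trivial factor of $N$.

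First I would specify the quantum circuit. It prepares a superposition $\sum_z g(z)\ket{z}$ supported on $z\in\{0,\ldots,R-1\}^d$ with a smooth (Gaussian-shaped) envelope $g$ and window $R=2^{\Theta(\sqrt n)}$; coherently computes $f(z)=\prod_i a_i^{z_i}\bmod N$ into an auxiliary register using controlled modular multiplications by precomputed classical constants $a_i^{2^j}\bmod N$; measures the auxiliary register; applies a QFT modulo $R$ on each of the $d$ components of the $z$-register; and measures. The key to hitting $O(n^{3/2}\log n)$ gates (rather than the $O(n^2\log n)$ of naive Shor) is a structured exponentiation schedule that exploits the fact that both $d$ and each $\log R$ equal $\sqrt n$, together with the smallness ($O(\log n)$ bits) of the $a_i$; this schedule, combined with the workspace of a fast modular multiplier, also accounts for the $O(n^{3/2})$ total qubit count.

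Next I would analyze the output distribution. Applying Poisson summation on $\lat$, conditional on the auxiliary measurement, the QFT output is statistically close to the rounding of $R\cdot v$ where $v$ is drawn from a discrete Gaussian on $\lat^*/\Z^d$. The quality of this approximation is controlled by the smoothing parameter of $\lat^*$, which via Banaszczyk-type bounds is governed by $\lambda_1(\lat)$. The heuristic bound $\lambda_1(\lat)\le T=\exp(O(\sqrt n))$ lets us pick $R$ of the same order times $\poly(n)$, keeping the total-variation distance negligible while staying within the gate budget. Concatenating the $d+4$ samples with the rows of $R\cdot I$ generates the scaled dual $R\cdot\lat^*$ with overwhelming probability (a standard full-rank-spanning argument), so polynomial-time linear algebra produces a basis of $\lat^*$ and, by dualization, of $\lat$. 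Running LLL on this basis yields a lattice vector of length at most $2^{O(d)}\cdot\lambda_1(\lat)=\exp(O(\sqrt n))$, with $O(\sqrt n)$-bit coordinates; this vector determines a square root $b=\prod_i b_i^{z_i}\bmod N$ of $1$ modulo $N$, and under the heuristic that a constant fraction of short vectors of $\lat$ lie in $\lat\setminus\lat_0$, the integer $\gcd(b-1,N)$ is a non-trivial factor with constant probability.

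The main obstacle, in my view, is the Fourier-analytic step: rigorously bounding the statistical distance between the circuit's output distribution and a clean discrete Gaussian on $\lat^*/\Z^d$ when $\lat$ is specified only implicitly through a modular product and the only geometric handle we have is the heuristic bound on $\lambda_1(\lat)$. Standard lattice-Gaussian techniques can achieve this in principle, but reconciling the smoothing requirement against the $O(n^{3/2}\log n)$ gate budget (which caps the window $R$ and the precision of the QFT) is the delicate trade-off at the heart of the construction.
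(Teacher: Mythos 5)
Your overall architecture (Gaussian superposition over $\Z^d$, compute $\prod_i a_i^{z_i}\bmod N$, measure, QFT, repeat $d+4$ times, post-process with LLL, finish with the gcd step) matches the paper, but there are two genuine gaps in the classical post-processing and one in the circuit-size accounting. The most serious is your reconstruction step: you propose to take the $d+4$ measured points, adjoin $R\cdot I$, and "produce a basis of $\lat^*$ by polynomial-time linear algebra, and, by dualization, of $\lat$." The measured points $w_i$ are not elements of $\lat^*/\Z^d$; they are perturbations of such elements by noise of magnitude $\approx\sqrt d/R$, so exact linear algebra on them does not recover $\lat^*$ (and no continued-fraction-style rounding is available in dimension $d>1$). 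The paper never recovers $\lat$ or $\lat^*$ exactly. Instead it builds a $(2d+4)$-dimensional \emph{extended} lattice $\lat'$ whose basis embeds the noisy $w_i$ scaled by $S=\delta^{-1}$ (Lemma~\ref{lem:extendedlattice}, Corollary~\ref{cor:extendedlattice}): every $u\in\lat$ lifts to a $u'\in\lat'$ of comparable norm, and—provided the $v_i$ generate $\lat^*/\Z^d$, which holds with constant probability by Pomerance's theorem—every sufficiently short vector of $\lat'$ projects onto a vector of $\lat$. LLL is then run directly on $\lat'$. This construction, not the Fourier analysis you flag as the main obstacle, is the missing idea; the Fourier step is a routine Poisson-summation calculation giving statistical distance $O(2^{-d})$ and does not involve the smoothing parameter or $\lambda_1(\lat)$ at all ($T$ enters only through the choice of $R$ needed so that $2^{O(d)}T$ stays below the threshold $\delta^{-1}(4\det\lat)^{-1/(d+4)}/6$).

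The second gap is your final step. The theorem's hypothesis is only that \emph{some} vector of $\lat\setminus\lat_0$ has norm at most $T$; it does not say the shortest vector of $\lat$, or a constant fraction of short vectors, lies outside $\lat_0$, so your appeal to an extra heuristic of that form both weakens the result and fails to prove the stated theorem. The fix is structural: run LLL so as to output a \emph{list} $z_1,\dots,z_\ell$ generating a sublattice that contains every lattice vector of norm at most $T$ (Claim~\ref{clm:lll}); since the hypothesized $u\in\lat\setminus\lat_0$ is an integer combination of the $z_i$ and $\lat_0$ is a subgroup, at least one $z_i$ must lie outside $\lat_0$, and you try the gcd step on each. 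Finally, your exponentiation circuit as described—controlled modular multiplications by precomputed $n$-bit constants $a_i^{2^j}\bmod N$—costs $d\cdot\log R=n$ large multiplications, i.e.\ $O(n^2\log n)$ gates, which is no better than Shor. The saving comes from a different schedule: for each bit position $j$ of the exponents, multiply together the relevant subset of the \emph{small} $O(\log d)$-bit numbers $a_i$ (cost $\tilde O(d)$ via a product tree), and interleave these with only $\log D=O(\sqrt n)$ squarings of an $n$-bit accumulator, giving $O(\sqrt n\cdot n\log n)=O(n^{3/2}\log n)$.
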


\paragraph{Related work:}
The idea of reducing the cost of the quantum circuit at the expense of applying it several times independently and classically post-processing the
outputs was already suggested by Seifert~\cite{Seifert01} (see also~\cite{EkeraH17,Ekera20}). However, the improvement  obtained by his algorithm is only by a constant factor. We also note that lattices like the one used in our construction were used in different contexts before; see~\cite{DucasP19} and references therein.

\paragraph{Followup work:}
Ragavan and Vaikuntanathan reduced the space requirements of the algorithm to only $O(n \log n)$ qubits~\cite{RagavanV23}, close to the $O(n)$ qubits required in optimized implementations of Shor's algorithm. 
Ekerå and Gärtner introduced several extensions of the algorithm, most notably for solving the discrete logarithm problem~\cite{EkeraG23}. 

\paragraph{Acknowledgements:} The author is grateful to Martin Eker{\aa}, Craig Gidney, Minki Hhan, Igor Shparlinski, Noah Stephens-Davidowitz, Andrew Sutherland, Thomas Vidick, and Ronald de Wolf for their comments on an early draft. 

\section{High level overview of the algorithm}


The algorithm can be thought of as a multidimensional analogue of Shor's algorithm. At the core of the algorithm is the quantum procedure presented in Section~\ref{sec:quantum}. It starts by creating a quantum superposition over $\ket{z}$ for $z \in \Z^d$. For convenience we use a discrete Gaussian state of some radius $R$. It then uses an elementary classical procedure (described below) in superposition to compute $\prod_i a_i^{z_i} \bmod N$ in a new register. This register will be ignored, so effectively can be thought of as being measured. This creates an $\lat$-periodic state over the $\ket{z}$ register, or more precisely, a superposition over a random coset of $\lat$, truncated at radius $R$. We then apply the quantum Fourier transform and measure. This results in vectors from the dual lattice $\lat^*$. However, because the original $\lat$-periodic state only extends up to radius $R$, what we obtain is an \emph{approximation} of vectors in $\lat^*$ up to error roughly $1/R$. The reader might recall a similar phenomenon occurring in Shor's algorithm, and the use of continued fractions there to recover the exact periodicity. In Section~\ref{sec:recovering} we show how to use these approximations to vectors in $\lat^*$ to recover a basis of a lattice $\lat'$ that is essentially the same as $\lat$ for vectors up to norm roughly $2^{-n/d} R$. 
Recalling that we heuristically expect to have vectors in $\lat \setminus \lat_0$ of norm at most $T=\exp(O(n/d))$, we then use the LLL algorithm (which achieves an approximation factor of $2^d$) to obtain a vector in $\lat \setminus \lat_0$ of norm at most $2^{d}\cdot T$. For this to work, we need this norm to be below the bound of $2^{-n/d} R$, i.e., we need $R > 2^{d+n/d} \cdot T$. By taking $d=\sqrt{n}$, it suffices to take $R = \exp(C \sqrt{n})$ for some constant $C>0$. 

As in Shor's algorithm, the quantum circuit size required for this algorithm is dominated by the classical exponentiation procedure. It is here that we benefit from the fact that the $a_i$ are small numbers. Indeed, with $R$ as above, we can compute $\prod_i a_i^{z_i} \bmod N$ using a circuit of size only $\tilde{O}(n^{3/2})$. The elementary but crucial idea is to perform all multiplications on the small numbers $a_i$ directly, so that the only operations we have to perform on large $n$-bit numbers are $\log_2 R = O(\sqrt{n})$ squaring operations. Related algorithmic ideas were used before to compute exponents; see~\cite{Pippenger80} and references therein. 

\section{The quantum procedure}\label{sec:quantum}

Fix $d>0$, $a_1,\ldots,a_d$, and $\lat$ as before. Let $R > \sqrt{2 d}$ be another parameter to be chosen later, and take $D$ to be a power of $2$ somewhat larger than $R$; say, $D \in [2\sqrt{d} \cdot R, 4\sqrt{d} \cdot R) $. We will show a quantum procedure that outputs a uniform sample from $\lat^* / \Z^d$ (a set of cardinality $\det\lat$) perturbed by Gaussian noise of roughly $1/R$ and discretized to the grid $\{0,1/D,\ldots,(D-1)/D\}^d$. 
In more detail, the output will be within statistical distance $1/\poly(d)$ of the distribution $Q$ supported on $\{0,1/D,\ldots,(D-1)/D\}^d$ whose mass at point $w$ is equal to
\begin{align}\label{eq:defofq}
Q(w) := (\det\lat)^{-1} \sum_{v \in \lat^*/\Z^d} Q_v(w) \; ,
\end{align}
where for $v \in \lat^*/\Z^d$, $Q_v$ is the probability distribution defined as
\begin{align}\label{eq:defofqv}
Q_v(w) := \frac{\rho_{1/{\sqrt{2}R}}(v-w+\Z^d)}{\rho_{1/{\sqrt{2}R}}(v-D^{-1} \Z^d)} \; .
\end{align}
Here, we use the Gaussian function $\rho_s: \R^d \to \R$ defined for $s>0$ as
\[
\rho_s(z) = \exp(-\pi \|z\|^2/s^2) \; .
\]
In other words, a sample from $Q$ can be described as the output of the following process: let $v \in \lat^*/\Z^d$ be a uniform coset; then, output a sample from the Gaussian distribution on $\{0,1/D,\ldots,(D-1)/D\}^d$ whose mass at point $w$ is proportional to $\rho_{1/{\sqrt{2}R}}(v-w+\Z^d)$. Importantly, as we show in Claim~\ref{clm:appendix}, with all but probability $O(2^{-d})$, $\dist_{\R^d/\Z^d}(w,v) \le \sqrt{d}/(\sqrt{2}R)$, where $\dist_{\R^d/\Z^d}(w,v):= \min_{z \in \Z^n} \dist(w,v+z)$ denotes distance in the torus, i.e., modulo $1$. 

The quantum procedure starts by approximating to within $1/\poly(d)$ the state proportional to
\begin{align}\label{eq:initialstate}
    \sum_{z \in \{-D/2, \ldots, D/2-1 \}^d}
    \rho_R(z) \ket{z} \; ,
\end{align}
similarly to how it was done in, e.g.,~\cite{Regev09}. Namely, to generate this ``discrete Gaussian state'', first note that it can be written as the tensor product of $d$ copies of the one-dimensional state ($d=1$). It therefore suffices to generate the one-dimensional state. To do this, we use a standard technique which basically puts each qubit from the most-significant to the least-significant into the appropriate superposition of $\ket{0}$ and $\ket{1}$ conditioned on the values of the previous qubits~\cite{GroverR2002}. To obtain a small circuit size, notice that beyond the $O(\log d)$ most significant qubits, all remaining qubits are within distance $1/\poly(d)$ of the ``plus state'' $(\ket{0}+\ket{1})/\sqrt{2}$, no matter what values we condition on. (This uses that $D = O(\poly(d) \cdot R)$ and that $\rho_R$ changes slowly, namely, that for all $z \in \{-D/2,\ldots,D/2-1\}$ and $0 \le k \le D/\poly(d)$, $\rho_R(z)$ is within $1\pm 1/\poly(d)$ of $\rho_R(z+k)$.) We can therefore simply initialize the remaining qubits to the plus state using one Hadamard gate per qubit. In summary, we can approximate the state in~\eqref{eq:initialstate} using a quantum circuit of size  only $d (\log D + \poly(\log d))$, where the $\poly(\log d)$ term is for computing the rotation needed for each of the $O(\log d)$ most significant qubits, and the $\log D$ term is due to the Hadamard gates on the remaining qubits. 

The next step is the most costly one. Here we apply a classical procedure in superposition in order to compute the value $\prod_i a_i^{z_i+D/2} \bmod N$ into a new register $\ket{e}$. (We added $D/2$ for convenience so we do not need to worry about negative exponents.) 
In the following we describe the classical procedure; as is well known, it can be turned into a reversible quantum circuit that cleanly computes the same function. 
Notice that $h(z):=\prod_i a_i^{z_i} \bmod N$ is a homomorphism from $\Z^d$ to $\Z_N^*$, the multiplicative group of integers modulo $N$, and that its kernel is $\lat$. Therefore, there is a bijection between $\Z^d/\lat$ and the image of $h$. As a result, since $\ket{e}$ will be ignored, we can equivalently write the resulting state up to normalization as 
\begin{align}\label{eq:stateafterexponentiation}
   \sum_{e \in \Z^d/\lat}
    \sum_{z \in (\lat+e) \cap [-D/2, D/2)^d}
    \rho_R(z) \ket{z} \ket{e} \; .
\end{align}
To compute $\prod_i a_i^{z_i+D/2} \bmod N$, first notice that when all the exponents are in $\{0,1\}$, we can compute the product of the $d$ numbers in a binary tree fashion, leading to the recurrence $T(d) = 2  T(d/2) + M(d \log d)$, where $M(k)$ is the number of gates needed to compute the product of two $k$-bit numbers, and here we are using that $a_1,\ldots,a_d$ are all small $O(\log d)$ bit numbers. Using fast integer multiplication~\cite{HarveyH21}, $M(k)=O(k \log k)$, leading to a circuit of size $O(d \log^3 d)$. The general case of exponents in $\{0,\ldots,D-1\}$ can be handled using a repeated squaring-like idea. More specifically, for $j=0,\ldots,\lfloor \log_2 (D-1) \rfloor$, let $z_{ij}$ denote the $j$th bit of $z_i+D/2$, with $j=0$ being the most significant. Then, letting $e$ be a register initialized to $1$, we do the following for $j = 0,\ldots,\lfloor \log_2 (D-1) \rfloor$: square $e$, then compute the product of the subset of the $a_{i}$ determined by the $z_{ij}$, and multiply $e$ by the result. To summarize, the circuit size needed for this step is $O(\log D \cdot (d \log^3 d + n \log n))$, where we use that $e$ is an $n$-bit number and can therefore be squared in time $O(n \log n)$.

In the final step, we apply the quantum Fourier transform (over $\Z_D^d$) to the $\ket{z}$ register, and then output the vector in $\{0,1/D,\ldots,(D-1)/D\}^d$ obtained by measuring that register and dividing by $D$. 
As we show using a standard calculation in Appendix~\ref{sec:fourier} (specifically, in Claim~\ref{clm:phioneclosetophitwo} and Proposition~\ref{prop:appendix}), the resulting distribution is within $O(2^{-d})$ distance of the distribution $Q$, as desired. 
The circuit size needed for this step is only 
$O(d \log D \cdot \log( (\log D )/ \eps ))$ 
by using approximate QFT with error $\eps$~\cite{coppersmith2002approximate}. 
Taking $\eps = 1/\poly(d)$, we get circuit size 
$O(d\cdot \log D \cdot (\log \log D  +  \log d ))$.

To summarize, the quantum procedure uses a circuit of size 
\begin{align}\label{eq:quantumcircuitsize}
O(\log D \cdot (d \log^3 d + d \log \log D + n \log n) + d \cdot \poly(\log d))
\end{align}
and outputs a point $w \in [0,1)^d$ that is within distance $\sqrt{d}/(\sqrt{2}R)$ of a uniformly chosen $v \in \lat^* / \Z^d$ (with all but probability $1/\poly(d)$).

\section{Recovering a lattice from noisy samples of the dual}\label{sec:recovering}

\begin{theorem}[\cite{Pomerance01}]\label{thm:pomerance}
Suppose $G$ is a finite abelian group with minimal number of generators $r$. Then, when choosing elements from $G$ independently and uniformly, the expected number of elements needed to generate $G$ is less than $r + \sigma$, where $\sigma = 2.118456563\ldots$. 
\end{theorem}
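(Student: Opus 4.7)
The plan is to reduce the question to an expectation calculation about independent random vectors in $\F_p$-vector spaces, one for each prime $p$ dividing $|G|$, and then bound the resulting quantity via a tail-sum.

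First, I would use the primary decomposition $G = \prod_{p \mid |G|} G_p$ into Sylow $p$-subgroups. A uniform sample from $G$ is the tuple of uniform samples from each $G_p$, and the components across different primes are independent. Moreover, the minimal number of generators satisfies $r = \max_p r_p$, where $r_p := \dim_{\F_p}(G_p / p G_p)$ is the minimal number of generators of $G_p$. A key reduction, via Nakayama's lemma (or a direct argument for finite abelian $p$-groups), is that a subset of $G_p$ generates $G_p$ if and only if its image in $G_p/pG_p$ spans this space over $\F_p$. Hence generating $G$ is equivalent to spanning $G_p/pG_p \cong \F_p^{r_p}$ for every prime $p$ dividing $|G|$.

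Let $T$ denote the stopping time for generating $G$, and $T_p$ the stopping time for the projections to span $\F_p^{r_p}$. By the reduction, $T = \max_p T_p$, and the $T_p$ are mutually independent across primes. The probability that $k$ uniform vectors in $\F_p^{r_p}$ fail to span equals $1 - \prod_{j=0}^{r_p-1}(1 - p^{j-k})$, which follows from the standard count of full-rank $r_p \times k$ matrices over $\F_p$. Writing $k = r_p + m$ with $m \geq 0$ gives an explicit expression for $P(T_p - r_p > m)$, and in particular one can check $P(T_p - r_p \leq m) \geq \prod_{i \geq m+1}(1 - p^{-i})$.

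Next, I would use the simple inequality $\max_p T_p \leq r + \max_p (T_p - r_p)$ (valid because $T_p \geq r_p$ with probability $1$) together with the tail-sum formula for nonnegative integer random variables:
\[
E[T] - r \;\leq\; \sum_{m \geq 0} P\bigl(\max_p (T_p - r_p) > m\bigr) \;=\; \sum_{m \geq 0}\Bigl(1 - \prod_p P(T_p - r_p \leq m)\Bigr),
\]
using independence across primes in the last step. Substituting the explicit lower bound on $P(T_p - r_p \leq m)$ yields an upper bound that depends only on which primes appear in $|G|$ and on the $r_p$, but is uniformly bounded by the quantity obtained by taking the worst case (formally, taking $r_p \to \infty$ for every prime $p$):
\[
\sigma \;=\; \sum_{m \geq 0}\Bigl(1 - \prod_p \prod_{i \geq m+1}(1 - p^{-i})\Bigr).
\]
The main obstacle is then purely analytic and numerical: establishing that this double product converges fast enough in both $m$ and $p$ so that the outer sum is finite, and evaluating it to the stated constant $2.118456563\ldots$. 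Once convergence and the numerical value are in hand, the theorem follows. A secondary point to verify is that the chain of inequalities is essentially tight (otherwise $\sigma$ could be replaced by a smaller constant), which one does by exhibiting a sequence of groups, e.g.\ products of many distinct primes with appropriately chosen Sylow ranks, for which $E[T] - r$ approaches $\sigma$.
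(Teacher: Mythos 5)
The paper does not prove this statement at all: it is imported as a black box from the cited reference \cite{Pomerance01}, so there is no in-paper proof to compare against. Your outline is essentially a faithful reconstruction of Pomerance's actual argument (primary decomposition, reduction to spanning $\F_p^{r_p}$ via the Frattini quotient, independence across primes, the tail-sum bound, and the limit over all primes and all ranks yielding $\sigma = 1 + \sum_{j\ge 1}\bigl(1-\prod_p\prod_{i>j}(1-p^{-i})\bigr)$, which is exactly your sum since the $m=0$ term equals $1$ because $\prod_p(1-p^{-1})=0$), and the steps you leave open --- convergence of the outer sum (the $m$-th term is $O(2^{-m})$) and the numerical evaluation of $\sigma$ --- are exactly the routine parts.
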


\begin{corollary}\label{cor:pomerance}
In the setting of Theorem~\ref{thm:pomerance}, $r+4$ uniformly random elements of $G$ generate $G$ with probability at least $1/2$. 
\end{corollary}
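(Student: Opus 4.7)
The plan is to apply Markov's inequality to the shifted random variable $X - r$, where $X$ is the number of uniformly random elements drawn until the sample generates $G$. First I would note that, by the definition of the minimal number of generators, any generating set of $G$ has size at least $r$, so $X \ge r$ with probability $1$; in particular, $X - r$ is a non-negative integer-valued random variable. By Theorem~\ref{thm:pomerance}, its expectation satisfies $\mathbb{E}[X - r] = \mathbb{E}[X] - r < \sigma$.

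Next I would observe that the event ``$r+4$ uniformly random elements fail to generate $G$'' is exactly the event $X \ge r + 5$, equivalently $X - r \ge 5$. Markov's inequality, applied to the non-negative random variable $X - r$, then gives
\[
\Pr[X - r \ge 5] \;\le\; \frac{\mathbb{E}[X - r]}{5} \;<\; \frac{\sigma}{5}.
\]
Since $\sigma = 2.118456563\ldots < 5/2$, this upper bound is strictly less than $1/2$, so the complementary event (that the sample does generate $G$) has probability strictly greater than $1/2$. This gives the desired conclusion.

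There is no real obstacle; the only thing to be careful about is the a.s.\ lower bound $X \ge r$, which is precisely what lets us convert the expectation bound of Theorem~\ref{thm:pomerance} into a tail bound via Markov. Any approach that tried to apply Markov directly to $X$ would only give $\Pr[X \ge r+5] \le (r+\sigma)/(r+5)$, which is useless for large $r$, so the shift by $r$ is essential.
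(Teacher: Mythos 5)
Your proof is correct and is essentially the same as the paper's: the paper phrases it as a proof by contradiction (if failure had probability at least $1/2$, then $\mathbb{E}[X] \ge r + 5/2 > r+\sigma$), which is exactly Markov's inequality applied to the shifted variable $X-r$ as you do. The key observation in both is the almost-sure lower bound $X \ge r$.
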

\begin{proof}
Otherwise, with probability at least $1/2$, $r+5$ elements are needed to generate $G$. Since this random variable is never smaller than $r$ by assumption, its expectation is at least $r+5/2 > r+\sigma$, in contradiction. 
\end{proof}

\begin{lemma}
\label{lem:randomsamplesdualenough}
Let $\lat \subset \Z^d$, $m \ge d+4$, and assume $v_1,\ldots,v_m$ are uniformly chosen cosets from $\lat^*/\Z^d$. With probability at least $1/4$, it holds that for all nonzero $u \in \Z^d/\lat$, there exists an $i$ such that $\inner{u, v_i} \notin [-\eps,\eps] \bmod 1$, where $\eps = (4 \det \lat)^{-1/m}/3$. 
\end{lemma}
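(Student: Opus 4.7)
The plan is to partition the nonzero cosets $u \in \Z^d/\lat$ by their order $k(u)$ in $\Z^d/\lat$ and to handle ``low-order'' cosets (those with $k(u) < 1/\eps$) using Corollary~\ref{cor:pomerance}, while bounding ``high-order'' cosets (those with $k(u) \ge 1/\eps$) via a direct first-moment estimate. The key duality input is that for any nonzero $u \in \Z^d/\lat$ of order $k$, the map $v \mapsto \inner{u,v} \bmod 1$ sends the uniform distribution on $\lat^*/\Z^d$ to the uniform distribution on the cyclic subgroup $(1/k)\Z/\Z$ of $\R/\Z$; this follows from the non-degenerate pairing between the (mutually Pontryagin-dual) finite abelian groups $\Z^d/\lat$ and $\lat^*/\Z^d$.

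For the low-order case, I would first note that $\Z^d/\lat$ is generated by the images of the $d$ standard basis vectors, so its minimum number of generators, and by duality that of $\lat^*/\Z^d$, is at most $d$. By Corollary~\ref{cor:pomerance}, with probability at least $1/2$ the samples $v_1,\ldots,v_m$ with $m \ge d+4$ generate $\lat^*/\Z^d$. Conditional on this, for every nonzero $u$ there must exist an $i$ with $\inner{u,v_i} \not\equiv 0 \bmod 1$, since otherwise $u$ would annihilate a generating set and hence all of $\lat^*/\Z^d$, contradicting $u \ne 0$ by non-degeneracy. For low-order $u$ the value $\inner{u,v_i}$ lies in $(1/k(u))\Z/\Z$, so any nonzero value has torus-distance at least $1/k(u) > \eps$ from $0$.

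For the high-order case, fix a nonzero $u$ with $k(u) \ge 1/\eps$. The duality fact gives $\Pr_v[\inner{u,v} \in [-\eps,\eps] \bmod 1] \le (2\lfloor \eps k(u) \rfloor + 1)/k(u) \le 2\eps + 1/k(u) \le 3\eps$, since $(1/k)\Z/\Z$ contains at most $2\eps k + 1$ points within torus-distance $\eps$ of $0$. By independence of the $v_i$, the probability that all $m$ of them produce a small inner product with $u$ is at most $(3\eps)^m$. Taking a union bound over the at most $\det\lat$ high-order cosets and using $\eps = (4\det\lat)^{-1/m}/3$, this contribution is at most $\det\lat \cdot (3\eps)^m = 1/4$.

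Combining the two pieces by a union bound, the probability that the conclusion of the lemma fails is at most $1/2 + 1/4 = 3/4$, giving the desired $1/4$ lower bound on the good event. The main point of care is the duality input — checking that the pushforward under the pairing with a fixed nonzero $u$ is uniform on a cyclic subgroup of the correct order — but this is standard once one invokes non-degeneracy; the rest is an elementary order-based dichotomy that matches Pomerance's $r+4$ bound on one side with a first-moment estimate on the other.
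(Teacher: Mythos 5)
Your proof is correct and follows essentially the same route as the paper: Pomerance's bound (via Corollary~\ref{cor:pomerance}) handles the cosets whose pairing image is a coarse cyclic subgroup, a $(3\eps)^m$ first-moment bound plus a union bound over at most $\det\lat$ cosets handles the rest, and the two bad events combine to probability at most $3/4$. The only (harmless) difference is presentational: you identify the paper's parameter $t$ explicitly as the order of $u$ via Pontryagin duality, and you phrase the final step as a clean union bound over the two failure events rather than conditioning on the generation event.
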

\begin{proof}
First, notice that the group $\lat^*/\Z^d$ can be generated by at most $d$ elements, e.g., by taking a basis of $\lat^*$. Therefore, by Corollary~\ref{cor:pomerance}, with probability at least $1/2$, $v_1,\ldots,v_m$ generate $\lat^*/\Z^d$. Assume that this is the case. Fix some nonzero $u \in \Z^d / \lat$, and consider the distribution of $\inner{u,v} \bmod 1$ where $v$ is uniformly chosen from $\lat^*/\Z^d$. This distribution is not identically zero (as otherwise $u$ would be the zero coset  $\lat$). In fact, it must be equal to the uniform distribution over the set $\{0,1/t,2/t, \ldots, (t-1)/t\}$ for some $t \ge 2$. This follows, e.g., from the invariance of the uniform distribution over $\lat^*/\Z^d$ to shifts by elements from that same group. If $t < 1/\eps$ then by our assumption, there must exist an $i$ such that $\inner{u, v_i} \neq 0 \bmod 1$ which in particular implies that $\inner{u, v_i} \notin [-\eps,\eps] \bmod 1$. Otherwise, assume $t \ge 1/\eps$ and notice that for any fixed $i$, the probability of $\inner{u, v_i} \in [-\eps,\eps] \bmod 1$ is 
\[
(1+2\floor{t \eps })/t
\le 
3\eps \; .
\]
Therefore, the probability that $\inner{u, v_i} \in [-\eps,\eps] \bmod 1$ for all $i \in \{1,\ldots,m\}$ is at most $(3\eps)^m$. We complete the proof by applying the union bound over all $\det \lat - 1$ nonzero elements $u$ in $\Z^d / \lat$.
\end{proof}

\begin{lemma}\label{lem:extendedlattice}
Let $\lat \subset \Z^d$ and $m \ge d+4$. Assume $v_1,\ldots,v_m$ are uniformly chosen cosets from $\lat^*/\Z^d$. For some $\delta>0$ let $w_1,\ldots,w_m \in [0,1)^d$ satisfy that $\dist_{\R^d/\Z^d}(w_i,v_i) < \delta$ for all $i$. For some $S>0$, define the $d+m$-dimensional lattice $\lat'$ generated by the columns of
\NiceMatrixOptions{cell-space-limits = 5pt}
\[
B = 
\begin{pNiceArray}{w{c}{1cm}|c}[margin]
 &  \\
 I_{d \times d} & 0 \\
 &  \\
\hline
 S\cdot w_1 & \\
 \Vdots &  S \cdot I_{m \times m}\\
 S\cdot w_m &  
\end{pNiceArray} \; .
\]
Then, for any $u \in \lat$, there exists a vector $u' \in \lat'$ whose first $d$ coordinates are equal to $u$ and whose norm is at most $\|u\| \cdot (1 + m \cdot S^2 \cdot \delta^2)^{1/2}$. 
Moreover, with probability at least $1/4$ (over the choice of the $v_i$), any nonzero $u' \in \lat'$ of norm $\|u'\| < \min(S,\delta^{-1}) \cdot \eps / 2$ satisfies that its first $d$ coordinates are a nonzero vector in $\lat$, where $\eps = (4 \det \lat)^{-1/m}/3$.
\end{lemma}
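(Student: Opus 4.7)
The plan is to parameterize $\lat'$ concretely and then prove the two claims in sequence. A general element of $\lat'$, obtained by applying $B$ to an integer vector $(u, y) \in \Z^d \times \Z^m$, has the form
\[
u' = \bigl(u,\; S(\inner{u,w_1} + y_1),\; \ldots,\; S(\inner{u,w_m} + y_m)\bigr).
\]
Since $\dist_{\R^d/\Z^d}(w_i, v_i) < \delta$, I fix a representative $\tilde v_i \in \lat^*$ of $v_i$ and write $w_i = \tilde v_i + z_i + \eta_i$ with $z_i \in \Z^d$ and $\|\eta_i\| < \delta$. This decomposition will drive both parts.

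For the first claim, take $u \in \lat$ arbitrary. Since $u \in \lat$ and $\tilde v_i \in \lat^*$, the inner product $\inner{u, \tilde v_i}$ is an integer, and $\inner{u, z_i}$ is trivially an integer, so $\inner{u, w_i} \equiv \inner{u, \eta_i} \pmod{1}$. I choose $y_i \in \Z$ so that $\inner{u, w_i} + y_i = \inner{u, \eta_i}$ exactly. The resulting $u'$ has first $d$ coordinates equal to $u$, and $\|u'\|^2 = \|u\|^2 + S^2 \sum_i \inner{u, \eta_i}^2 \le \|u\|^2 (1 + m S^2 \delta^2)$ by Cauchy--Schwarz.

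For the second claim, I condition on the probability-at-least-$1/4$ event from Lemma~\ref{lem:randomsamplesdualenough}: every nonzero coset in $\Z^d/\lat$ has some $i$ with $\inner{\cdot, v_i} \notin [-\eps, \eps] \bmod 1$. Let $u' \in \lat'$ be nonzero with $\|u'\| < \min(S, \delta^{-1}) \cdot \eps/2$. First, $u = 0$ is impossible: it would force some $y_i \ne 0$, hence $\|u'\| \ge S$, contradicting the norm bound (since $\eps < 2$). Suppose for contradiction that $u$ represents a nonzero coset in $\Z^d/\lat$, and let $i$ be the index Lemma~\ref{lem:randomsamplesdualenough} supplies for this coset. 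The coordinate bound $|S(\inner{u, w_i} + y_i)| \le \|u'\| < S \eps/2$ yields $|\inner{u, w_i} + y_i| < \eps/2$, while $\|u\| \le \|u'\| < \delta^{-1} \eps/2$ combined with $\|\eta_i\| < \delta$ yields $|\inner{u, \eta_i}| < \eps/2$. Applying the identity $\inner{u, v_i} \equiv \inner{u, w_i} + y_i - \inner{u, \eta_i} \pmod{1}$ (which uses $u \in \Z^d$ and $y_i \in \Z$ to absorb the integer offsets from $z_i$ and from the coset representative) and the triangle inequality places $\inner{u, v_i}$ within $\eps$ of $0$ modulo $1$, contradicting the choice of $i$.

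The only genuinely subtle point is the modular bookkeeping: $v_i$ is defined only as an element of $\R^d/\Z^d$, so any argument through explicit representatives requires $u \in \Z^d$ to kill the ambiguity. Once the two identities $\inner{u, w_i} \equiv \inner{u, \eta_i} \pmod 1$ (for $u \in \lat$) and $\inner{u, v_i} \equiv \inner{u, w_i} + y_i - \inner{u, \eta_i} \pmod 1$ (for $u \in \Z^d$, $y_i \in \Z$) are in hand, both parts reduce to direct norm estimates.
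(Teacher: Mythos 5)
Your proof is correct and follows essentially the same route as the paper's: the first part is the paper's Cauchy--Schwarz argument made explicit via the representative decomposition $w_i = \tilde v_i + z_i + \eta_i$, and the second part conditions on the same event from Lemma~\ref{lem:randomsamplesdualenough} and runs the same case analysis ($u=0$ forces $\|u'\|\ge S$; $u\notin\lat$ with $\|u\|<\delta^{-1}\eps/2$ forces a coordinate of size at least $S\eps/2$). The extra modular bookkeeping you flag is handled implicitly in the paper but your explicit treatment is sound.
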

\begin{proof}
Take any $u \in \lat$. Then, for any $i$, $\inner{u,v_i} = 0 \bmod 1$ (since $v_i \in \lat^*/\Z^d$), and therefore, by Cauchy-Schwarz, $\inner{u,w_i}$ is within distance $\delta \|u\|$ of an integer. The claim now follows by taking the combination of the first $d$ columns of $B$ given by the coordinates of $u$ and taking an appropriate combination of the remaining $m$ columns of $B$ to make the last $m$ coordinates of the resulting vector at most $S \delta \|u\|$ in absolute value. 
Next, assume $v_1,\ldots,v_m$ satisfy the conclusion of Lemma~\ref{lem:randomsamplesdualenough}, which happens with probability at least $1/4$. Take any nonzero $u' \in \lat'$ and let $u \in \Z^d$ be its first $d$ coordinates. If $u=0$ then clearly $\|u'\| \ge S$, as desired. Assume therefore that $u$ is not in $\lat$, or equivalently, that the coset $u+\lat$ is not the zero element in $\Z^d / \lat$. If $\|u\| \ge \eps / (2 \delta)$, we are done, so assume $\|u\| < \eps / (2 \delta)$. 
By Lemma~\ref{lem:randomsamplesdualenough}, there exists an $i$ such that $\inner{u,v_i}$ is at least $\eps$ away from an integer. By Cauchy-Schwarz, this implies that $\inner{u,w_i}$ is at least
\[
\eps - \delta \|u\| > \eps/2
\]
away from an integer. As a result, $u'$ has a coordinate of absolute value at least $S\eps/2$, as desired. 
\end{proof}

For convenience, we record here the special case of $m=d+4$ and $S=\delta^{-1}$. 

\begin{corollary}\label{cor:extendedlattice}
Let $\lat \subset \Z^d$ and $v_1,\ldots,v_{d+4}$ be uniformly chosen cosets from $\lat^*/\Z^d$. For some $\delta>0$ let $w_1,\ldots,w_{d+4} \in [0,1)^d$ satisfy that $\dist_{\R^d/\Z^d}(w_i,v_i) < \delta$ for all $i$. Let $\lat'$ be the $(2d+4)$-dimensional lattice defined in Lemma~\ref{lem:extendedlattice} (with $S=\delta^{-1}$).
Then, for any $u \in \lat$, there exists a vector $u' \in \lat'$ whose first $d$ coordinates are equal to $u$ and whose norm is at most $\|u\| \cdot (d+5)^{1/2}$. 
Moreover, with probability at least $1/4$ (over the choice of the $v_i$), any nonzero $u' \in \lat'$ of norm $\|u'\| < \delta^{-1} \cdot (4 \det \lat)^{-1/(d+4)}/6$ satisfies that its first $d$ coordinates are a nonzero vector in $\lat$.
\end{corollary}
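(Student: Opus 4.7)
The plan is to derive this corollary as a direct specialization of Lemma~\ref{lem:extendedlattice}, setting $m = d+4$ and $S = \delta^{-1}$; no new argument is required beyond simplifying the constants appearing in that lemma.

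First I would verify the deterministic ``existence of a short lift'' part. Substituting $m = d+4$ and $S = \delta^{-1}$ into the bound $\|u\| \cdot (1 + m \cdot S^2 \cdot \delta^2)^{1/2}$ supplied by Lemma~\ref{lem:extendedlattice} gives
\[
\|u\| \cdot \bigl(1 + (d+4) \cdot \delta^{-2} \cdot \delta^2\bigr)^{1/2} = \|u\| \cdot (d+5)^{1/2},
\]
which is exactly what the corollary claims, so this part is immediate.

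Next I would handle the probabilistic conclusion. The ``with probability at least $1/4$'' guarantee is inherited verbatim from Lemma~\ref{lem:extendedlattice}. With the chosen parameters, $\eps = (4 \det \lat)^{-1/(d+4)}/3$, and since $S = \delta^{-1}$ the minimum $\min(S,\delta^{-1})$ simply equals $\delta^{-1}$. Hence the threshold $\min(S,\delta^{-1}) \cdot \eps/2$ for ``short'' nonzero vectors of $\lat'$ becomes
\[
\delta^{-1} \cdot (4 \det \lat)^{-1/(d+4)}/6,
\]
matching the statement precisely.

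Since the entire proof is arithmetic substitution into an already-established result, there is essentially no obstacle. The only thing to keep careful track of is combining the factor $\tfrac12$ from Lemma~\ref{lem:extendedlattice} with the factor $\tfrac13$ from the definition of $\eps$ to obtain the $\tfrac16$ in the final bound. I would also briefly note that the choice $S = \delta^{-1}$ is the natural one: it equalizes the two arguments of $\min(S,\delta^{-1})$, which is exactly the setting that balances the penalty $m \cdot S^2 \cdot \delta^2$ on the norm of a lifted lattice vector against the detection threshold for spurious short vectors.
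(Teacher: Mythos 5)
Your proposal is correct and matches the paper exactly: the corollary is recorded there as the immediate specialization of Lemma~\ref{lem:extendedlattice} to $m=d+4$ and $S=\delta^{-1}$, and your arithmetic for the norm bound $(d+5)^{1/2}$ and the threshold $\delta^{-1}(4\det\lat)^{-1/(d+4)}/6$ is precisely the intended substitution.
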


\section{Proof of the main theorem}

\begin{claim}\label{clm:lll}
There is an efficient classical algorithm that given a basis of a lattice $\lat \subset \R^k$ and some norm bound $T>0$, outputs a list of $\ell \le k$ vectors $z_1,\ldots,z_\ell \in \lat$ of norm at most $\sqrt{k} 2^{k/2} T$ with the property that any vector in $\lat$ of norm at most $T$ must be an integer combination of them. In other words, the sublattice they generate contains all the vectors in $\lat$ of norm at most $T$. 
\end{claim}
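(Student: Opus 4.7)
The plan is to apply the LLL lattice reduction algorithm to the input basis and then retain only an initial segment of the resulting reduced basis. Concretely, I would run LLL (with the usual parameter $\delta = 3/4$) on the given basis, obtaining a reduced basis $b_1,\ldots,b_k$ of $\lat$ with Gram--Schmidt orthogonalization $\widetilde{b}_1,\ldots,\widetilde{b}_k$, and then let $\ell$ be the largest index with $\|\widetilde{b}_\ell\| \le T$ (taking $\ell = 0$ if no such index exists). The output list is $z_1,\ldots,z_\ell := b_1,\ldots,b_\ell$.

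The containment property follows from a standard Gram--Schmidt projection argument. If $v = \sum_{i=1}^k c_i b_i$ is a lattice vector of norm at most $T$, and $j$ denotes the largest index with $c_j \ne 0$, then the coefficient of $\widetilde{b}_j$ in the Gram--Schmidt expansion of $v$ is exactly $c_j$ (no higher-index term contributes), so $\|v\| \ge |c_j|\,\|\widetilde{b}_j\| \ge \|\widetilde{b}_j\|$. By the choice of $\ell$ this forces $j \le \ell$, hence $v$ lies in the integer span of $b_1,\ldots,b_\ell$, as required. This also shows the $b_1,\ldots,b_\ell$ are linearly independent (inherited from LLL), so $\ell \le k$.

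For the norm bound I would use the two standard LLL inequalities. The Lov\'asz condition gives $\|\widetilde{b}_{i+1}\|^2 \ge \tfrac{1}{2}\|\widetilde{b}_i\|^2$, which iterated yields $\|\widetilde{b}_j\|^2 \le 2^{i-j}\|\widetilde{b}_i\|^2$ for $j < i$, and, applied to indices $i \le \ell$, gives $\|\widetilde{b}_i\| \le 2^{(\ell-i)/2} T$. Plugging this together with $|\mu_{i,j}| \le 1/2$ into the decomposition $\|b_i\|^2 = \|\widetilde{b}_i\|^2 + \sum_{j<i} \mu_{i,j}^2 \|\widetilde{b}_j\|^2$ and summing the geometric series yields $\|b_i\| \le 2^{(i-1)/2}\|\widetilde{b}_i\|$. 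Combining the two bounds gives $\|b_i\| \le 2^{(\ell-1)/2}\,T \le 2^{(k-1)/2}\,T \le \sqrt{k}\,2^{k/2}\,T$ for all $i \le \ell$, well within the claimed bound.

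There is no serious obstacle: everything reduces to standard properties of LLL-reduced bases, and the constant $\sqrt{k}\,2^{k/2}$ is loose enough that the analysis is insensitive to the precise LLL parameter used. Polynomial running time is immediate from that of LLL together with the trivial cost of selecting $\ell$ and returning the first $\ell$ basis vectors.
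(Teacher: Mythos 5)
Your proposal is correct and follows essentially the same route as the paper: run LLL and output a prefix of the reduced basis determined by a Gram--Schmidt norm threshold, using the Lov\'asz condition to relate consecutive Gram--Schmidt norms and size-reduction to bound the $\|b_i\|$. The only (immaterial) difference is the cutoff rule --- you take the largest $\ell$ with $\|\widetilde{b}_\ell\| \le T$ and push the Lov\'asz condition backward to bound $\|b_i\|$ for $i \le \ell$, whereas the paper takes the smallest $\ell$ with $\|\widetilde{z}_{\ell+1}\| \ge 2^{k/2}T$ and pushes it forward to show all later Gram--Schmidt norms exceed $T$; both yield the claimed bound (yours slightly more tightly).
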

\begin{proof}
The algorithm starts by computing an LLL reduced basis of $\lat$, which we denote by $z_1,\ldots,z_k$. It next computes the Gram-Schmidt orthogonalization of the basis, denoted by $\tilde{z}_1, \ldots, \tilde{z}_k$. Finally, let $\ell \ge 0$ be smallest such that $\|\tilde{z}_{\ell+1}\| \ge 2^{k/2} T$ (or $k$ if no such index exists), and output the vectors $z_1,\ldots, z_\ell$. 

By properties of LLL reduced bases, we have that for all $i=1,\ldots,k-1$,  $\|\tilde{z}_{i+1}\| \ge \|\tilde{z}_{i}\| /\sqrt{2}$. It follows that for $i = \ell+1,\ldots,k$, $\|\tilde{z}_{i}\| > T$. As a result, any vector in $\lat$ of norm at most $T$ must be an integer combination of $z_1,\ldots,z_\ell$. 
We complete the proof by recalling that an LLL reduced basis is also ``size reduced,'' implying that
\[
\|z_i\|^2 \le \sum_{j=1}^i \|\tilde{z}_j\|^2 < k 2^{k} T^2 \; .
\]
\end{proof}

\begin{proof}[Proof of Theorem~\ref{thm:main}]
Take $d=\sqrt{n}$ and $R=\exp(C\sqrt{n})$ for a large enough constant $C>0$. With these parameters, and recalling that $D \in [2\sqrt{d} \cdot R, 4\sqrt{d} \cdot R)$, the quantum procedure's circuit size in~\eqref{eq:quantumcircuitsize} becomes
\[
O(n^{3/2} \log n)
\]
and its output is a point $w$ within distance $\delta = \sqrt{d}/(\sqrt{2}R)$ of a uniformly chosen $v \in \lat^* / \Z^d$ (with all but probability $1/\poly(d)$). Apply the quantum procedure $d+4$ times independently to obtain such vectors $w_1,\ldots,w_{d+4}$. 

Consider the $(2d+4)$-dimensional lattice $\lat'$ given in Corollary~\ref{cor:extendedlattice}. By our assumption, there exists a vector in $u' \in \lat'$ of norm at most $(d+5)^{1/2} \cdot T$ whose first $d$ coordinates are a nonzero vector $u \in \lat \setminus \lat_0$. We next apply the classical algorithm in Claim~\ref{clm:lll} to $\lat'$ with the norm bound $(d+5)^{1/2} \cdot T$. As its output, we obtain vectors $z'_1,\ldots,z'_\ell$ of norm at most 
\[
(2d+4)^{1/2} 2^{d+2} (d+5)^{1/2} \cdot T
<
\delta^{-1} (4 \det \lat)^{-1/(d+4)} / 6 \; ,
\]
where the inequality follows since $\det\lat \le N \le 2^n$ and by choosing $R$ large enough. As a result, by the second property in Corollary~\ref{cor:extendedlattice}, except with probability $1/4$, if we denote the first $d$ coordinates of $z'_i$ by $z_i$, we have that $z_i \in \lat$ for all $i$. Moreover, at least one of the $z_i$ must not be in $\lat_0$, otherwise $u$, which is an integer combination of the $z_i$ (since $u'$ is an integer combination of the $z'_i$) would also be in $\lat_0$. Finally, we apply for each of the $z_i$ the gcd calculation outlined in Section~\ref{sec:intro}. Since there exists an $i$ such that $z_i \in \lat \setminus \lat_0$, one of these calculations will yield a non-trivial factor of $N$, as desired.
\end{proof}

The extension to super-polynomial classical post-processing mentioned in the introduction is similar. One needs to take $d = n^{1/2+\eps}$, $R = \exp(C n^{1/2-\eps})$, and use the fact that for all $0 < \delta < 1$, one can approximate lattice problems in dimension $d$ to within $\exp(d^{1-\delta})$ in time $\exp(\tilde{O}(d^\delta))$~\cite{GamaN08}.

%

\appendix

\section{Fourier transform calculation}\label{sec:fourier}

We will need the following useful fact by Banaszczyk. It shows that for any lattice $\lat$, almost all the Gaussian mass in $\rho_s$ is given by points of norm at most $\sqrt{d}s$. Here and elsewhere, for a set $A \subset \R^n$, we use the notation $f(A)$ to denote the sum $\sum_{x \in A} f(x)$. 

\begin{lemma}[\cite{banaszczykNewBoundsTransference1993}]\label{lem:bana1}
For any $d$-dimensional lattice $\lat$, $x \in \R^n$, and $s>0$,
 \[
 \rho_s(\{ y \in \lat+x ~|~ \|y\| > \sqrt{d} s \} ) < 2^{-d} \cdot \rho_s(\lat) \; .  
 \]
\end{lemma}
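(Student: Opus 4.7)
The plan is a standard Markov-style tail argument combined with two applications of the Poisson summation formula. Denote the set to be bounded by $A := \{y \in \lat + x : \|y\| > \sqrt{d}\,s\}$.

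First, I would introduce a free parameter $\alpha \in (0,\pi)$ and use the trivial inequality $\|y\|^2/s^2 > d$ valid for every $y \in A$ to write
\[
\rho_s(y) \;=\; e^{-\pi\|y\|^2/s^2} \;\le\; e^{-\alpha d}\,e^{-(\pi-\alpha)\|y\|^2/s^2}.
\]
Summing over $y \in A$ and relaxing the sum to all of $\lat+x$ gives
\[
\rho_s(A) \;\le\; e^{-\alpha d}\,\rho_{s'}(\lat+x), \qquad s' := s/\sqrt{1-\alpha/\pi}.
\]

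Next, I would pass from $\lat+x$ to $\lat$ via two applications of Poisson summation. The first,
\[
\rho_{s'}(\lat+x) \;=\; \frac{(s')^d}{\det \lat}\sum_{w\in\lat^*} e^{2\pi i \langle x,w\rangle}\rho_{1/s'}(w),
\]
immediately yields $\rho_{s'}(\lat+x) \le \rho_{s'}(\lat)$ by bounding each phase by $1$ in absolute value. The second application, noting that $s' > s$ forces $\rho_{1/s'}(w) \le \rho_{1/s}(w)$ pointwise on $\lat^*$, gives $\rho_{s'}(\lat) \le (s'/s)^d\,\rho_s(\lat) = (1-\alpha/\pi)^{-d/2}\rho_s(\lat)$.

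Combining these three inequalities yields
\[
\rho_s(A) \;\le\; \bigl(e^{-\alpha}(1-\alpha/\pi)^{-1/2}\bigr)^d \rho_s(\lat),
\]
and the proof concludes by choosing $\alpha$ so that the base is strictly less than $1/2$. For instance, $\alpha = \pi/2$ gives the constant $\sqrt{2}\,e^{-\pi/2} \approx 0.294 < 1/2$, so $\rho_s(A) < 2^{-d}\rho_s(\lat)$ as desired. (An optimization over $\alpha$ would recover Banaszczyk's sharper constant, but any value giving a base below $1/2$ suffices here.) The only subtlety I foresee is keeping the directions of the two Poisson-summation comparisons straight; both are routine Gaussian identities and I do not expect any genuine obstacle.
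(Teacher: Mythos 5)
The paper does not prove this lemma---it is quoted from Banaszczyk's paper---so there is nothing internal to compare against; your argument is a correct, self-contained proof and is essentially Banaszczyk's original one (an exponential-moment/Chernoff shift of the Gaussian parameter from $s$ to $s'=s/\sqrt{1-\alpha/\pi}$, followed by Poisson summation to discard the shift $x$ and to compare $\rho_{s'}(\lat)$ with $\rho_s(\lat)$). All three inequalities point the right way, $\sqrt{2}\,e^{-\pi/2}\approx 0.294<1/2$ gives the claimed strict bound, and optimizing $\alpha=\pi-1/2$ indeed recovers the sharp constant $\sqrt{2\pi e}\,e^{-\pi}$.
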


\begin{corollary}\label{cor:bana}
If $\lat$ is a lattice containing no nonzero vectors of norm at most $\sqrt{d} s$ then 
 $ \rho_s(\lat \setminus \{0\}) \le 2\cdot 2^{-d}$.
\end{corollary}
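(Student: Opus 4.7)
The plan is to apply Lemma~\ref{lem:bana1} directly with $x=0$ and then solve a short self-referential inequality.

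First, set $x = 0$ in Lemma~\ref{lem:bana1}, so that $\lat + x = \lat$. The hypothesis that $\lat$ contains no nonzero vector of norm at most $\sqrt{d}s$ means that every nonzero lattice vector has norm strictly greater than $\sqrt{d}s$. Consequently, the set $\{y \in \lat : \|y\| > \sqrt{d} s\}$ appearing in the lemma is exactly $\lat \setminus \{0\}$. Hence
\[
\rho_s(\lat \setminus \{0\}) \;<\; 2^{-d} \cdot \rho_s(\lat) \;=\; 2^{-d}\bigl(1 + \rho_s(\lat \setminus \{0\})\bigr),
\]
using $\rho_s(0) = 1$ to split off the contribution of the origin.

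Now abbreviate $X := \rho_s(\lat \setminus \{0\})$, so the inequality above reads $X < 2^{-d}(1+X)$, i.e.\ $X(1 - 2^{-d}) < 2^{-d}$. For $d \ge 1$ we have $1 - 2^{-d} \ge 1/2$, so $X < 2 \cdot 2^{-d}$, which is the desired bound. (For $d = 0$ the statement is vacuous since no such lattice exists with nonzero vectors, or one can interpret it trivially.)

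I do not anticipate a real obstacle here: the content is entirely in Lemma~\ref{lem:bana1}, and the corollary is just the one-line rearrangement above. The only minor care needed is to note that the nonstrict versus strict inequality and the factor of $2$ slack come from handling the $1 - 2^{-d}$ denominator, which is why the bound is $2 \cdot 2^{-d}$ rather than $2^{-d}$.
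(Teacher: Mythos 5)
Your proposal is correct and matches the paper's proof exactly: apply Lemma~\ref{lem:bana1} with $x=0$, identify the tail set with $\lat\setminus\{0\}$, and rearrange the resulting self-referential inequality using $1-2^{-d}\ge 1/2$. The paper compresses the final rearrangement into one word, but the content is identical.
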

\begin{proof}
Using Lemma~\ref{lem:bana1} with $x=0$, write
\[
\rho_s(\lat \setminus \{0\}) =
\rho_s(\{ y \in \lat ~|~ \|y\| > \sqrt{d} s \} ) < 
2^{-d} \cdot (1 + \rho_s(\lat \setminus \{0\}) )
\; 
\]
and rearrange.
\end{proof}

We will use the following formulation of the Poisson summation formula.
Here, $\hat{f}$ denotes the Fourier transform of $f$. For instance, $\widehat{\rho_s}=s^n \rho_{1/s}$.
\begin{lemma}(Poisson summation formula)\label{lem:psf}
For any lattice $\lat$ and any (nice enough) function $f: \R^n \to \C$,
\[ 
f(\lat) = \det(\lat^*) \hat{f}(\lat^*) \; ,
\]
where $\hat{f}$ denotes the Fourier transform of $f$.
\end{lemma}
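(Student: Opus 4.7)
The plan is to follow the classical derivation of Poisson summation via Fourier series on the torus $\R^n/\lat$: periodize $f$ along $\lat$, expand the periodization in a Fourier series whose frequencies live on $\lat^*$, and evaluate at the origin.

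First I would introduce the $\lat$-periodization
\[
g(x) \;=\; \sum_{v \in \lat} f(x+v).
\]
For $f$ ``nice enough'' (say, Schwartz) this converges absolutely and defines a smooth function descending to $\R^n/\lat$. Crucially, the left-hand side of the claimed identity is $g(0) = f(\lat)$.

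Next I would expand $g$ as a Fourier series on the torus. The natural orthonormal basis of $L^2(\R^n/\lat)$ is given by the characters $\chi_w(x) = \exp(2\pi i \inner{w,x})$ for $w \in \lat^*$; these are exactly the complex exponentials that are $\lat$-periodic, since $\inner{w,v}\in\Z$ for $v\in\lat$, $w\in\lat^*$. Writing $g(x) = \sum_{w\in\lat^*} c_w\, \chi_w(x)$, the Fourier coefficient is
\[
c_w \;=\; \frac{1}{\det\lat}\int_{\R^n/\lat} g(x)\, e^{-2\pi i \inner{w,x}}\, dx
\;=\; \frac{1}{\det\lat}\int_{\R^n} f(y)\, e^{-2\pi i \inner{w,y}}\, dy
\;=\; \det(\lat^*)\, \hat{f}(w),
\]
where the middle step unfolds the sum defining $g$ into a single integral over all of $\R^n$ (using that $\chi_w$ is $\lat$-periodic), and the last step uses $\vol(\R^n/\lat) = \det\lat = 1/\det(\lat^*)$.

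Finally I would evaluate at $x=0$:
\[
f(\lat) \;=\; g(0) \;=\; \sum_{w\in\lat^*} c_w \;=\; \det(\lat^*) \sum_{w \in \lat^*} \hat{f}(w) \;=\; \det(\lat^*)\, \hat{f}(\lat^*),
\]
which is the desired identity. The only real obstacle is justifying convergence, both of the series defining $g$ and of the Fourier series at the point $x=0$. This is exactly what the ``nice enough'' hypothesis handles: for Schwartz $f$ both $f$ and $\hat{f}$ decay faster than any polynomial, so absolute convergence of both series, together with the pointwise equality of a Schwartz function with its Fourier series on the torus, is standard.
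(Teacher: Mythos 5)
Your proof is correct: this is the standard derivation of Poisson summation via periodizing $f$ over $\lat$ and expanding the periodization in a Fourier series with frequencies in $\lat^*$, and the normalization $\vol(\R^n/\lat)=\det\lat = 1/\det(\lat^*)$ is handled correctly for the paper's convention $\widehat{\rho_s}=s^n\rho_{1/s}$. The paper states this lemma as a known fact without proof, so there is nothing to compare against; your argument, including the remark that the ``nice enough'' hypothesis (e.g.\ Schwartz) is what guarantees absolute convergence of both series and pointwise equality of the Fourier series, is exactly the standard justification.
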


Let $\ket{\varphi_1}$ be the state in Eq.~\eqref{eq:stateafterexponentiation} which, to recall, is given by 
\begin{align*}
\ket{\varphi_1} &= 
   Z_1^{-1} \sum_{e \in \Z^d/\lat}
    \sum_{z \in (\lat+e) \cap \{-D/2, \ldots, D/2-1\}^d}
    \rho_R(z) \ket{z} \ket{e} \; ,
\end{align*}
where $Z_1 > 0$ is the normalization term. 

\begin{claim}\label{clm:zone}
We have that $Z_1^2 \in [ 1\pm 2 \cdot 2^{-d}] (R/\sqrt{2})^d$. 
\end{claim}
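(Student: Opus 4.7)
The plan is to strip the normalization down to a standard Gaussian lattice sum, then evaluate that sum by Poisson summation, picking up $2^{-d}$-scale errors from truncation and from the dual tail.

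First, since each $z \in \{-D/2,\ldots,D/2-1\}^d$ lies in a unique coset $e = z + \lat$, the kets $\ket{z}\ket{e}$ appearing in $\ket{\varphi_1}$ are pairwise orthonormal. Using the identity $\rho_R(z)^2 = \rho_{R/\sqrt{2}}(z)$, the normalization condition $\length{\ket{\varphi_1}} = 1$ therefore gives
\[
Z_1^2 \;=\; \sum_{z \in \Z^d \cap [-D/2, D/2)^d} \rho_{R/\sqrt{2}}(z).
\]

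Second, I would compare this truncated sum to $\rho_{R/\sqrt{2}}(\Z^d)$. Any omitted $z \in \Z^d$ has some coordinate of magnitude at least $D/2$, so $\length{z} \ge D/2 \ge \sqrt{d}\, R > \sqrt{d} \cdot (R/\sqrt{2})$, using $D \ge 2\sqrt{d}\, R$ from Section~\ref{sec:quantum}. Lemma~\ref{lem:bana1} applied to the lattice $\Z^d$ with $x=0$ and $s = R/\sqrt{2}$ then bounds the omitted mass by $2^{-d} \rho_{R/\sqrt{2}}(\Z^d)$, yielding
\[
(1 - 2^{-d})\, \rho_{R/\sqrt{2}}(\Z^d) \;\le\; Z_1^2 \;\le\; \rho_{R/\sqrt{2}}(\Z^d).
\]

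Third, Poisson summation (Lemma~\ref{lem:psf}), together with $\widehat{\rho_{R/\sqrt{2}}} = (R/\sqrt{2})^d \rho_{\sqrt{2}/R}$ and $\det(\Z^d) = 1$, gives $\rho_{R/\sqrt{2}}(\Z^d) = (R/\sqrt{2})^d \rho_{\sqrt{2}/R}(\Z^d)$. Since $\Z^d$ has shortest nonzero vector of Euclidean norm $1$ and the hypothesis $R > \sqrt{2d}$ from Section~\ref{sec:quantum} forces $\sqrt{d}\cdot\sqrt{2}/R < 1$, Corollary~\ref{cor:bana} applies to the dual and yields $\rho_{\sqrt{2}/R}(\Z^d \setminus \{0\}) \le 2 \cdot 2^{-d}$, hence $\rho_{\sqrt{2}/R}(\Z^d) \in [1, 1+2 \cdot 2^{-d}]$. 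Combining this with the sandwich above absorbs both $O(2^{-d})$ corrections into a single factor $1 \pm 2 \cdot 2^{-d}$, giving the claim.

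There is no real obstacle; the main thing to watch is that Lemma~\ref{lem:bana1} uses the threshold $\sqrt{d}\,s$, which forces the scale $s = R/\sqrt{2}$ (rather than $R$) for the truncation step and correspondingly matches the hypothesis $R > \sqrt{2d}$ (rather than $R > \sqrt{d}$) that makes the dual application of Corollary~\ref{cor:bana} legal. Everything else is routine bookkeeping.
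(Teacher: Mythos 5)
Your proof is correct and follows essentially the same route as the paper: rewrite $Z_1^2$ as $\sum_z \rho_{R/\sqrt{2}}(z)$, sandwich it against $\rho_{R/\sqrt{2}}(\Z^d)$ via Lemma~\ref{lem:bana1}, then evaluate via Poisson summation and Corollary~\ref{cor:bana}. The parameter checks ($D/2 \ge \sqrt{d}R/\sqrt{2}$ and $R > \sqrt{2d}$) match the paper's.
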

\begin{proof}
Notice that
\[
Z_1^2 
= \sum_{z \in \{-D/2, \ldots, D/2-1\}^d} \rho_R(z)^2
= \sum_{z \in \{-D/2, \ldots, D/2-1\}^d} \rho_{R/\sqrt{2}}(z) \; .
\]
Therefore, by Lemma~\ref{lem:bana1} (with $x=0$) and using $D/2 \ge \sqrt{d} R/\sqrt{2}$, $Z_1^2$ satisfies
\[
(1-2^{-d}) \cdot \rho_{R/\sqrt{2}}(\Z^d) \le
Z_1^2
\le \rho_{R/\sqrt{2}}(\Z^d) \; .
\]
By the Poisson summation formula, $\rho_{R/\sqrt{2}}(\Z^d)$ is equal to $(R/\sqrt{2})^d \rho_{\sqrt{2}/R}(\Z^d)$. Moreover, since $1 \ge \sqrt{d} \cdot \sqrt{2} / R$, Corollary~\ref{cor:bana} shows that $\rho_{\sqrt{2}/R}(\Z^d) \in [1,1+2\cdot 2^{-d}]$. 
\end{proof}

Also define the state
\[
\ket{\varphi_2} = 
   Z_2^{-1} \sum_{e \in \Z^d/\lat}
    \sum_{z \in \Z_D^d}
    \rho_R\Big((z+D\Z_d) \cap (\lat+e)  \Big) \ket{z} \ket{e}     
    \; ,
\]
where, again, $Z_2>0$ is the normalization term. In other words, whereas in $\ket{\varphi_1}$ we truncate the discrete Gaussian to the box $\{-D/2,\ldots, D/2-1\}^d$, in $\ket{\varphi_2}$ we let it wrap around modulo $D$. In the next claim, we show that the two states are very close, which intuitively follows from the fact that the Gaussian mass does not extend much beyond radius $\sqrt{d} R$. 

\begin{claim}\label{clm:phioneclosetophitwo}
We have that
\[
\| \ket{\varphi_1} - \ket{\varphi_2} \|_2
\le 2 \cdot 2^{-d} \; .
\]
Moreover, $Z_1/Z_2 \in [1 \pm 2^{-d}]$.
\end{claim}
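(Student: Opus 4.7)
The plan is to express both $\ket{\psi_j} := Z_j \ket{\varphi_j}$ in a common indexing parametrized by $z \in \Z^d$, namely $\ket{\psi_1} = \sum_{z \in \Omega \cap \Z^d} \rho_R(z)\, \ket{z \bmod D}\ket{z \bmod \lat}$ and $\ket{\psi_2} = \sum_{z \in \Z^d} \rho_R(z)\, \ket{z \bmod D}\ket{z \bmod \lat}$, where $\Omega := \{-D/2,\ldots,D/2-1\}^d$. Two integers $z, z'$ collapse to the same basis state iff they differ by an element of $\lat_D := \lat \cap D\Z^d$, and since $\Omega - \Omega \subset (-D,D)^d$ meets $D\Z^d$ only at $0$, $\Omega$ hits each $\lat_D$-coset at most once. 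Computing the coefficient at $\ket{z \bmod D}\ket{z \bmod \lat}$ directly then yields the exact identities $\|\ket{\psi_2} - \ket{\psi_1}\|^2 = \sum_{C \in \Z^d/\lat_D} \rho_R(C \setminus \Omega)^2$ and $\langle \psi_1 | \psi_2 \rangle = Z_1^2 + \alpha$, with $\alpha := \sum_{z \in \Omega \cap \Z^d} \rho_R(z)\, \rho_R(z + \lat_D \setminus \{0\})$, and hence $Z_2^2 = Z_1^2 + 2\alpha + \|\ket{\psi_2} - \ket{\psi_1}\|^2$.

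Next I would bound both $\alpha$ and $\|\ket{\psi_2}-\ket{\psi_1}\|^2$ using Banaszczyk (Lemma~\ref{lem:bana1} and Corollary~\ref{cor:bana}). The key geometric fact is that for any $z \in \Omega$ and nonzero $v \in \lat_D \subseteq D\Z^d$, the point $z + v$ satisfies $\|z + v\|_\infty \ge D/2 \ge \sqrt{d}\, R$, since the nonzero coordinates of $v$ are $\pm D, \pm 2D, \ldots$ while $|z_i| \le D/2$. To obtain estimates scaled against the correct $\ell^2$ normalizer $Z_1^2 \approx (R/\sqrt{2})^d$ rather than the too-large $\rho_R(\Z^d) \approx R^d$, I would apply the Gaussian factorization $\rho_R(a)\rho_R(b) = \rho_{R/\sqrt{2}}(\tfrac{a+b}{2})\, \rho_{R\sqrt{2}}(a - b)$ and expand the two quantities as double sums over $v \in \lat_D$. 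The diagonal $v = 0$ contribution reduces to $\rho_{R/\sqrt{2}}(\Z^d \setminus \Omega) \le 2^{-d}\rho_{R/\sqrt{2}}(\Z^d)$ via Banaszczyk on $\Z^d$ with parameter $R/\sqrt{2}$; the $v \ne 0$ terms pick up the factor $\rho_{R\sqrt{2}}(\lat_D \setminus \{0\}) \le 2 \cdot 2^{-d}$ from Corollary~\ref{cor:bana} applied to $\lat_D$, whose minimum distance $\ge D \ge \sqrt{d} \cdot R\sqrt{2}$ justifies the hypothesis. Together these give $\alpha,\, \|\ket{\psi_2}-\ket{\psi_1}\|^2 = O(2^{-d})\, Z_1^2$.

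To convert to the normalized statement, the identity $\|\ket{\varphi_1} - \ket{\varphi_2}\|^2 = 2(Z_1 Z_2 - \langle \psi_1 | \psi_2\rangle)/(Z_1 Z_2)$ combined with the AM--GM bound $Z_1 Z_2 \le \tfrac{1}{2}(Z_1^2 + Z_2^2) = Z_1^2 + \alpha + \tfrac{1}{2}\|\ket{\psi_2} - \ket{\psi_1}\|^2$ yields the clean inequality $\|\ket{\varphi_1} - \ket{\varphi_2}\|^2 \le \|\ket{\psi_2} - \ket{\psi_1}\|^2/(Z_1 Z_2)$. The ratio bound $Z_1/Z_2 \in [1 \pm 2^{-d}]$ similarly follows from $|Z_2 - Z_1| \le (Z_2^2 - Z_1^2)/(2 Z_1)$ together with the $O(2^{-d})\, Z_1^2$ estimate on $2\alpha + \|\ket{\psi_2} - \ket{\psi_1}\|^2$.

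The main obstacle I expect is choosing the Banaszczyk parameter correctly in each sub-sum: a naive single application of Lemma~\ref{lem:bana1} with parameter $R$ produces a $2^{-d}$ factor against $\rho_R(\Z^d)$, which is $2^{d/2}$ times too large relative to $Z_1^2$ and would only yield $\|\ket{\varphi_1} - \ket{\varphi_2}\| = O(2^{-d/2})$. The factorization into an $R/\sqrt{2}$ piece (absorbed into the sum over $z$) and an $R\sqrt{2}$ piece (absorbed into the sum over $v$) is what lets each Banaszczyk application contribute a $2^{-d}$ factor against the correct second-moment normalizer, ultimately producing the desired $O(2^{-d})$ bound on the normalized distance.
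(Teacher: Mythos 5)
Your structural setup is sound: the common indexing of both states by cosets of $\lat\cap D\Z^d$, the exact identities $Z_2^2 = Z_1^2 + 2\alpha + \|Z_2\ket{\varphi_2}-Z_1\ket{\varphi_1}\|^2$ and $\|\ket{\varphi_1}-\ket{\varphi_2}\|^2 \le \|Z_2\ket{\varphi_2}-Z_1\ket{\varphi_1}\|^2/(Z_1Z_2)$, and the Gaussian factorization all check out, and your estimates do suffice for the ``Moreover'' part. But the main bound falls short by a square root. You establish $\|Z_2\ket{\varphi_2}-Z_1\ket{\varphi_1}\|^2 = O(2^{-d})\,Z_1^2$, which through your own conversion identity gives $\|\ket{\varphi_1}-\ket{\varphi_2}\|^2 = O(2^{-d})$, i.e., $\|\ket{\varphi_1}-\ket{\varphi_2}\| = O(2^{-d/2})$ --- precisely the bound you dismiss as insufficient in your final paragraph (your closing sentence conflates the normalized distance with its square). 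To reach the claimed $2\cdot 2^{-d}$ you need $\|Z_2\ket{\varphi_2}-Z_1\ket{\varphi_1}\|^2 = O(2^{-2d})\,Z_1^2$. The bottleneck is your diagonal ($v=0$) term $\rho_{R/\sqrt{2}}(\Z^d\setminus\Omega)$: Lemma~\ref{lem:bana1} applied in dimension $d$ can never yield more than a single factor of $2^{-d}$ regardless of how you tune the parameter, so your factorization fixes the normalizer mismatch but not the exponent.

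The points you discard actually lie at norm at least $D/2 \ge \sqrt{2}\cdot\sqrt{d}\cdot(R/\sqrt{2})$, i.e., $\sqrt{2}$ times the Banaszczyk threshold for parameter $R/\sqrt{2}$, so a sharper tail bound at radius $c\sqrt{d}s$ with $c=\sqrt{2}$ would rescue your argument; but the lemma as stated in the paper does not provide one. The paper sidesteps this by a dimension-doubling trick: it treats each pair $(z_1,z_2)$ of integer points collapsing to the same basis state as a single point of the $2d$-dimensional lattice $\M = \{(z_1,z_2)\in\Z^{2d} : z_1=z_2 \bmod D,\ z_1 = z_2\bmod\lat\}$, for which $\rho_R(\M)=Z_2^2$ exactly, and observes that if both $z_1$ and $z_2$ lie outside the box then $\|(z_1,z_2)\| \ge D/\sqrt{2}\ge\sqrt{2d}\,R$. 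A single application of Lemma~\ref{lem:bana1} in dimension $2d$ with parameter $R$ then gives $\|Z_1\ket{\varphi_1}-Z_2\ket{\varphi_2}\|^2 = \rho_R(\M') \le 2^{-2d}\rho_R(\M) = 2^{-2d}Z_2^2$, handling your diagonal and off-diagonal contributions simultaneously and against the correct normalizer. Replace your Banaszczyk step with this doubling argument (or first prove the $c=\sqrt{2}$ tail bound); the remainder of your proof then goes through.
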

\begin{proof}
Let $\M$ be the $2d$-dimensional lattice given by all vectors $(z_1,z_2) \in \Z^{2d}$ such that both $z_1 = z_2 \bmod D$ and $z_1 = z_2 \bmod \lat$. Then notice that
\begin{align*}
Z_2^2 &= 
\sum_{e \in \Z^d/\lat}
\sum_{z \in \Z_D^d}
 \rho_R\Big((z+D\Z_d) \cap (\lat+e)  \Big)^2 \\
&=
\sum_{\substack{z_1,z_2 \in \Z^d \\ 
z_1 = z_2 \bmod D \\ 
z_1 = z_2 \bmod \lat}}
 \rho_R(z_1) \cdot \rho_R(z_2) \\
&=
\rho_R(\M) \; .
\end{align*}
Similarly, denoting by $\M'$ the subset of $\M$ corresponding to all vectors $(z_1,z_2)$ such that neither $z_1$ nor $z_2$ are in $\{-D/2,\ldots,D/2-1\}^d$,
\begin{align*}
\| Z_1 \ket{\varphi_1} - Z_2 \ket{\varphi_2} \|_2^2
&= 
   \sum_{e \in \Z^d/\lat}
    \sum_{z \in \Z_D^d}
    \rho_R\Big(((z+D\Z_d) \cap (\lat+e)) \setminus \{-D/2, \ldots, D/2-1\}^d \Big)^2   \\
&=
\rho_R(\M') \\
& \le 
2^{-2d} \cdot \rho_R(\M)  \\
& = 
2^{-2d} \cdot Z_2^2
\; ,
\end{align*}
where we used Lemma~\ref{lem:bana1} (with $x=0$) and the fact that all vectors in $\M'$ are of norm at least $D/\sqrt{2} \ge \sqrt{2d} R$. Dividing both sides by $Z_2^2$, we get
\[
\| (Z_1/Z_2) \ket{\varphi_1} - \ket{\varphi_2} \|_2
\le 
2^{-d} 
\; .
\]
By the triangle inequality, this implies that $Z_1/Z_2 = \| (Z_1/Z_2) \cdot \ket{\varphi_1} \|_2 \in [1 \pm 2^{-d}]$. Using the triangle inequality again, we get that 
\[
\| \ket{\varphi_1} - \ket{\varphi_2} \|_2
\le
\| \ket{\varphi_1} - (Z_1/Z_2) \cdot \ket{\varphi_1} \|_2 +
\| (Z_1/Z_2) \cdot \ket{\varphi_1} - \ket{\varphi_2} \|_2 
\le 2 \cdot 2^{-d} \; ,
\]
as desired. 
\end{proof}

\begin{proposition}\label{prop:appendix}
The distribution obtained by applying QFT to $\ket{\varphi_2}$, discarding the $e$ register, measuring the $z$ register, and dividing the result by $D$ is within statistical distance $O(2^{-d})$ of the distribution $Q$ defined in Eq.~\eqref{eq:defofq}. 
\end{proposition}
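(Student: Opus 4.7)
The plan is to compute the measurement distribution $P(w)$ on $w$ directly via the Poisson summation formula, match its main term to $Q(w)$, and bound the remaining error in $\ell_1$ using Banaszczyk's tail bound (Lemma \ref{lem:bana1}) repeatedly.

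First, after rewriting $\ket{\varphi_2}$ as $Z_2^{-1}\sum_{z' \in \Z^d}\rho_R(z')\ket{z' \bmod D}\ket{z' \bmod \lat}$ (by merging the two sums), applying QFT on the first register and then the shifted Poisson summation formula on $\lat$ (using that the Fourier transform of $x \mapsto \rho_R(x)e^{-2\pi i \inner{w,x}}$ equals $R^d \rho_{1/R}(\cdot+w)$), followed by splitting $v \in \lat^*$ as $(u,n) \in (\lat^*/\Z^d)\times \Z^d$ and collapsing the inner sum via $e^{2\pi i \inner{n,e}} = 1$, I obtain the amplitude
\[
A(w,e) = \frac{R^d}{Z_2 D^{d/2}\det\lat} \sum_{u \in \lat^*/\Z^d} e^{2\pi i \inner{u,e}} \rho_{1/R}(u+w+\Z^d).
\]
Squaring $|A(w,e)|$ and summing over $e \in \Z^d/\lat$ using orthogonality of the characters of $\Z^d/\lat$ (indexed by $\lat^*/\Z^d$) yields
\[
P(w) = C\sum_{v \in \lat^*/\Z^d} \rho_{1/R}(v-w+\Z^d)^2, \qquad C := \frac{R^{2d}}{Z_2^2 D^d \det\lat}.
\]

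Next I would apply the identity
\[
\rho_{1/R}(\Z^d+t)^2 = \sum_{\epsilon \in \{0,1\}^d} \rho_{1/(\sqrt{2}R)}(\Z^d+t+\epsilon/2)\cdot \rho_{1/(\sqrt{2}R)}(\Z^d+\epsilon/2),
\]
obtained by the parallelogram-law reindexing $(n_1,n_2) \mapsto ((n_1+n_2)/2, (n_1-n_2)/2)$ with $\epsilon$ encoding the parity constraint. The $\epsilon = 0$ piece contributes $C \rho_{1/(\sqrt{2}R)}(\Z^d) \cdot N(w)$, where $N(w) := \sum_v \rho_{1/(\sqrt{2}R)}(v-w+\Z^d)$ is the unnormalized numerator of $Q(w)$. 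A few applications of Poisson plus Corollary \ref{cor:bana} then give $\rho_{1/(\sqrt{2}R)}(\Z^d) = 1 + O(2^{-d})$, $\rho_{1/(\sqrt{2}R)}((1/2)\Z^d) = 1 + O(2^{-d})$ (both using $R > \sqrt{2d}$), and $\rho_{1/(\sqrt{2}R)}(v - D^{-1}\Z^d) = (D/(\sqrt{2}R))^d(1 + O(2^{-d}))$ uniformly in $v$ (using that $D \ge 2\sqrt{d}R > \sqrt{2d}R$ controls the dual lattice $D\Z^d$). Combined with Claims \ref{clm:zone} and \ref{clm:phioneclosetophitwo}, which give $Z_2^2 = (R/\sqrt{2})^d(1 + O(2^{-d}))$, the main term $C\rho_{1/(\sqrt{2}R)}(\Z^d) N(w)$ equals $Q(w) \cdot (1 + O(2^{-d}))$ pointwise.

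The remaining task is to bound the $\ell_1$ contribution of the $\epsilon \ne 0$ error, namely $C \sum_{\epsilon \ne 0}\rho_{1/(\sqrt{2}R)}(\Z^d + \epsilon/2) \cdot N^{(\epsilon)}(w)$ with $N^{(\epsilon)}(w) := \sum_v \rho_{1/(\sqrt{2}R)}(v + \epsilon/2 - w + \Z^d)$. The main obstacle is that pointwise this error can rival the main term at ``degenerate'' $w$ (for which $v-w$ has multiple near-minimizers in $\Z^d$). The way around it is to interchange summation: for each $\epsilon$ and $v$, the argument that $Q_v$ sums to 1 shows $\sum_w \rho_{1/(\sqrt{2}R)}(v + \epsilon/2 - w + \Z^d) = \rho_{1/(\sqrt{2}R)}(v + \epsilon/2 + D^{-1}\Z^d) = (D/(\sqrt{2}R))^d(1 + O(2^{-d}))$ uniformly, so that $\sum_w N^{(\epsilon)}(w) = \det\lat \cdot (D/(\sqrt{2}R))^d(1 + O(2^{-d}))$ uniformly in $\epsilon$. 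Combined with $\sum_{\epsilon \ne 0}\rho_{1/(\sqrt{2}R)}(\Z^d + \epsilon/2) = \rho_{1/(\sqrt{2}R)}((1/2)\Z^d) - \rho_{1/(\sqrt{2}R)}(\Z^d) = O(2^{-d})$, the various factors inside $C$ cancel to give $\sum_w |P(w) - \text{main}(w)| = O(2^{-d})$; together with the main-term-to-$Q$ comparison this yields $\|P - Q\|_{\mathrm{TV}} = O(2^{-d})$, as required.
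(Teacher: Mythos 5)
Your proposal is correct, and it diverges from the paper's proof at the one genuinely delicate step. Both arguments begin identically: QFT, Poisson summation over $\lat+e$, and orthogonality of the characters of $\Z^d/\lat$ reduce the measurement probability to $P(w) = C\sum_{v \in \lat^*/\Z^d}\rho_{1/R}(v-w+\Z^d)^2$ with $C = R^{2d}Z_2^{-2}D^{-d}(\det\lat)^{-1}$. The difference is in how this square is compared to $\sum_v \rho_{1/(\sqrt2 R)}(v-w+\Z^d)$. The paper keeps only the diagonal terms of the square, observing that this yields a pointwise \emph{lower} bound $P'(w) \le P(w)$ which is within $\ell_1$ distance $O(2^{-d})$ of the probability distribution $Q$; since $P$ also has total mass $1$, the discarded off-diagonal mass is automatically $O(2^{-d})$ in $\ell_1$, with no need to ever estimate it. You instead evaluate the square exactly via the parallelogram-law reindexing $(n_1,n_2)\mapsto((n_1+n_2)/2,(n_1-n_2)/2)$, obtaining the theta-function identity $\rho_{1/R}(\Z^d+t)^2 = \sum_{\epsilon\in\{0,1\}^d}\rho_{1/(\sqrt2R)}(\Z^d+t+\epsilon/2)\,\rho_{1/(\sqrt2R)}(\Z^d+\epsilon/2)$, and then bound the total mass of the $\epsilon\ne 0$ terms directly by interchanging sums and using Corollary~\ref{cor:bana} on $(1/2)\Z^d$ (which is exactly where the hypothesis $R>\sqrt{2d}$ enters, consistent with the paper). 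Your route is somewhat longer but more informative: it gives an exact nonnegative decomposition $P = \mathrm{main} + E$ with $\mathrm{main}(w) = (1\pm O(2^{-d}))Q(w)$ pointwise, whereas the paper's trick yields only a one-sided pointwise bound plus an $\ell_1$ statement. Your handling of the normalization constants (via Claims~\ref{clm:zone} and~\ref{clm:phioneclosetophitwo} and the Poisson estimate $\rho_{1/(\sqrt2R)}(v-D^{-1}\Z^d)=(D/(\sqrt2R))^d(1+O(2^{-d}))$, which is the paper's Eq.~\eqref{eq:sameprobabilityv}) matches the paper's. I see no gap.
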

\begin{proof}
The QFT of $\ket{\varphi_2}$ is given by
\begin{align*}
   &
   D^{-d/2}\cdot 
   Z_2^{-1} \cdot
   \sum_{e \in \Z^d/\lat}
    \sum_{w \in \{0,1/D,\ldots,(D-1)/D\}^d}
    \sum_{z \in \Z_D^d}
    \exp(2 \pi i \inner{w,z}) 
    \rho_R\Big((z+D\Z_d) \cap (\lat+e)  \Big) 
    \ket{w} \ket{e}      \\
    &=
   D^{-d/2}\cdot 
   Z_2^{-1} \cdot
   \sum_{e \in \Z^d/\lat}
    \sum_{w \in \{0,1/D,\ldots,(D-1)/D\}^d}
    \sum_{z \in \lat + e}
    \exp(2 \pi i \inner{w,z}) 
    \rho_R(z) 
    \ket{w} \ket{e}        
    \;.
\end{align*} 
Using the Poisson summation formula, this is equal to
\begin{align*}
    R^d\cdot 
    D^{-d/2}\cdot 
    (\det \lat)^{-1} \cdot 
    Z_2^{-1}
    \sum_{e \in \Z^d/\lat} 
    \sum_{w \in \{0,1/D,\ldots,(D-1)/D\}^d}
    \sum_{v \in \lat^*}
    \exp(2 \pi i \inner{v,e}) \rho_{1/R}(v-w) 
    \ket{w} \ket{e} \; .
\end{align*}
Therefore, after measuring (and discarding) $e$, the probability of measuring $w$ is
\begin{align*}
    P(w) := &R^{2d}\cdot 
    D^{-d}\cdot 
    (\det \lat)^{-2} \cdot 
    Z_2^{-2} \cdot 
    \sum_{e \in \Z^d/\lat} 
    \Big| \sum_{v \in \lat^*}
     \exp(2 \pi i \inner{v,e}) \rho_{1/R}(v-w) \Big|^2  \; .
\end{align*}
The sum can be simplified as
\begin{align*}
    \sum_{e \in \Z^d/\lat} &
    \sum_{v_1,v_2 \in \lat^*}
     \exp(2 \pi i \inner{v_1-v_2,e}) \rho_{1/R}(v_1-w)\rho_{1/R}(v_2-w) \\
    &=
    \det \lat \cdot 
    \sum_{\substack{v_1,v_2 \in \lat^* \\ v_1 = v_2 \bmod \Z^d}}
     \rho_{1/R}(v_1-w)\rho_{1/R}(v_2-w) \\
    & \ge
    \det \lat \cdot 
    \sum_{v \in \lat^*}
     \rho_{1/R}(v-w)^2 \\
    &=     
    \det \lat \cdot 
    \sum_{v \in \lat^*/\Z^d}
    \rho_{1/(\sqrt{2} R)}(v-w+\Z^d) \; ,
\end{align*}
where in the first equality we used that $\sum_{e \in \Z^d/\lat} \exp(2 \pi i \inner{v_1-v_2,e})$ is equal to $\det \lat$ if $v_1=v_2 \bmod \Z^d$ and is zero otherwise, and the inequality is simply by discarding terms where $v_1 \neq v_2$. 
Therefore, $P(w)$ is bounded from below by
\begin{align*}
    P'(w) &:= 
	R^{2d}\cdot D^{-d}\cdot (\det \lat)^{-1} \cdot Z_2^{-2} \cdot
    \sum_{v \in \lat^*/\Z^d}
    \rho_{1/(\sqrt{2} R)}(v-w+\Z^d) \\
	& = \; 
    \sum_{v \in \lat^*/\Z^d} \alpha_v Q_v(w) \; ,
\end{align*}
where $Q_v$ is the probability distribution defined in Eq.~\eqref{eq:defofqv}, and $\alpha_v$ is given by
\begin{align}
\alpha_v
&=
R^{2d}\cdot D^{-d}\cdot (\det \lat)^{-1} \cdot Z_2^{-2} \cdot
\rho_{1/(\sqrt{2}R)}(v+D^{-1} \Z^d) \nonumber \\
&= 
(R/\sqrt{2})^{d}\cdot (\det \lat)^{-1} \cdot Z_2^{-2} \cdot
\sum_{x \in D \Z^d} \exp(2 \pi i \inner{v,x}) \rho_{\sqrt{2}R}(x) \nonumber  \\
& \in 
(R/\sqrt{2})^{d}\cdot (\det \lat)^{-1} \cdot Z_2^{-2} \cdot
[1 \pm 2 \cdot 2^{-d}] \label{eq:sameprobabilityv}
\; ,
\end{align}
where we used the Poisson summation formula, and 
in the last step we applied Corollary~\ref{cor:bana} (using $D \ge \sqrt{2d}R$) and the triangle inequality. 
From Claim~\ref{clm:zone} and the second part of Claim~\ref{clm:phioneclosetophitwo}, we get that $Z_2^2$ is within $1 \pm O(2^{-d})$ of $(R/\sqrt{2})^d$, and so $\alpha_v \in (\det \lat)^{-1} \cdot [1 \pm O(2^{-d})]$.
This establishes that $P'$ is within $\ell_1$ distance $O(2^{-d})$ of $Q$. In particular, its $\ell_1$ norm (i.e., total mass) is at least $1-O(2^{-d})$, which, combined with it being a lower bound on $P$, implies that it is within $\ell_1$ distance $O(2^{-d})$ of $P$. Using the triangle inequality,
\[
\| P - Q \|_1 \le \|P - P' \|_1 + \|P' - Q\|_1 = O(2^{-d}) \; ,
\]
completing the proof.
\end{proof}

\begin{claim}\label{clm:appendix}
For any $v \in \R^d / \Z^d$, if $w$ is chosen from the distribution $Q_v$ defined in Eq.~\eqref{eq:defofqv}, then the probability that $\dist_{\R^d/\Z^d}(w,v) > \sqrt{d}/(\sqrt{2}R)$
is at most $O(2^{-d})$.
\end{claim}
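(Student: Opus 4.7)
Let me abbreviate $s = 1/(\sqrt{2}R)$, so that the event I want to bound is $\dist_{\R^d/\Z^d}(w,v) > \sqrt{d}\,s$. The plan is to write this probability as an explicit ratio of Gaussian mass sums, bound the numerator by Banaszczyk's tail lemma, and bound the denominator from below using Poisson summation together with Corollary~\ref{cor:bana}.

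By the definition of $Q_v$, the probability in question equals
\[
\frac{\sum_{w \text{ bad}} \rho_s(v-w+\Z^d)}{\rho_s(v-D^{-1}\Z^d)},
\]
where ``bad'' means $\dist_{\R^d/\Z^d}(w,v) > \sqrt{d}\,s$. The key observation for the numerator is that as $w$ ranges over the representatives $D^{-1}\{0,\ldots,D-1\}^d$ and $z$ ranges over $\Z^d$, the points $v-w+z$ enumerate the shifted lattice $v-D^{-1}\Z^d$ exactly once each. Moreover, $w$ is bad precisely when \emph{every} term $\|v-w+z\|$ in the definition of $\rho_s(v-w+\Z^d)$ exceeds $\sqrt{d}\,s$. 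Hence the numerator is bounded above by $\rho_s(\{y \in v - D^{-1}\Z^d : \|y\| > \sqrt{d}\,s\})$, and applying Lemma~\ref{lem:bana1} with lattice $D^{-1}\Z^d$ and shift $v$ gives
\[
\sum_{w \text{ bad}} \rho_s(v-w+\Z^d) \;<\; 2^{-d}\cdot \rho_s(D^{-1}\Z^d).
\]

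For the denominator, I would apply the Poisson summation formula to the lattice $D^{-1}\Z^d$ (whose dual is $D\Z^d$ with determinant $D^d$), yielding
\[
\rho_s(v - D^{-1}\Z^d) \;=\; D^d s^d \sum_{y \in D\Z^d} e^{-2\pi i \inner{y,v}} \rho_{1/s}(y),
\]
and similarly $\rho_s(D^{-1}\Z^d) = D^d s^d \,\rho_{1/s}(D\Z^d)$. Since $D \ge 2\sqrt{d}\,R = 2\sqrt{d}/(\sqrt{2}s) = \sqrt{2d}/s$, the lattice $D\Z^d$ has no nonzero vectors of norm at most $\sqrt{d}\cdot(1/s)$, so Corollary~\ref{cor:bana} gives $\rho_{1/s}(D\Z^d \setminus \{0\}) \le 2\cdot 2^{-d}$. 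Combining this with the triangle inequality applied to the Fourier expansion of $\rho_s(v-D^{-1}\Z^d)$, I get
\[
\rho_s(v - D^{-1}\Z^d) \;\ge\; D^d s^d (1 - 2\cdot 2^{-d})
\qquad\text{and}\qquad
\rho_s(D^{-1}\Z^d) \;\le\; D^d s^d (1 + 2\cdot 2^{-d}).
\]
Plugging both estimates into the ratio yields the desired bound $O(2^{-d})$.

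The only mildly subtle point is the identification of the numerator with a single sum over $v - D^{-1}\Z^d$ restricted to norms $>\sqrt{d}\,s$; once that bookkeeping is done, both bounds are direct applications of results already stated. I don't anticipate a real obstacle, since the condition $D \ge \sqrt{2d}\,R$ is exactly what makes Banaszczyk's bound and its corollary available on both the primal and dual sides.
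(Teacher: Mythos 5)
Your proposal is correct and follows essentially the same route as the paper: bound the numerator by Banaszczyk's tail bound (Lemma~\ref{lem:bana1}) applied to the shifted lattice $v - D^{-1}\Z^d$, and lower-bound the denominator via Poisson summation and Corollary~\ref{cor:bana}. The only cosmetic difference is that the paper handles the denominator step by citing the already-established estimate $\alpha_0 \le (1+O(2^{-d}))\alpha_v$ from Eq.~\eqref{eq:sameprobabilityv}, whereas you re-derive that same Poisson-summation bound inline.
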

\begin{proof}
Using Lemma~\ref{lem:bana1}, 
\begin{align*}
\rho_{1/{\sqrt{2}R}}&( \{ x \in v - D^{-1} \Z^d ~|~ \dist_{\R^d/\Z^d}(x,0) > \sqrt{d} / (\sqrt{2} R)\})  \\
&\le
\rho_{1/{\sqrt{2}R}}( \{ x \in v - D^{-1} \Z^d ~|~ \|x\| > \sqrt{d} / (\sqrt{2} R)\})  \\
&< 
2^{-d}
\rho_{1/{\sqrt{2}R}}(D^{-1}\Z^d) \\
&\le 
(1+O(2^{-d})) 2^{-d}
\rho_{1/{\sqrt{2}R}}(v-D^{-1}\Z^d) \; ,
\end{align*}
where in the last inequality we used Eq.~\eqref{eq:sameprobabilityv} (specifically, that $\alpha_0 \le (1+O(2^{-d})) \alpha_v$).
\end{proof}



\bibliographystyle{alpha}
\bibliography{quantum_factoring}

\end{document}